\newtheorem{theorem}{Theorem}
\newtheorem{conjecture}{Conjecture}
\newtheorem{example}{Example}
\newtheorem{definition}{Definition}
\title{Synthesising Recursive Functions for First-Order Model Counting:\\
  Challenges, Progress, and Conjectures}
\author{%
Paulius Dilkas$^1$\and
Vaishak Belle$^2$\\
\affiliations
$^1$National University of Singapore, Singapore, Singapore\\
$^2$University of Edinburgh, Edinburgh, UK\\
\emails
paulius.dilkas@nus.edu.sg,
vbelle@ed.ac.uk
}
\crefname{algocf}{Algorithm}{Algorithms}
\Crefname{algocf}{Algorithm}{Algorithms}
\crefname{condition}{Condition}{Conditions}
\Crefname{condition}{Condition}{Conditions}
\crefname{line}{line}{lines}
\crefname{diagram}{Diagram}{Diagrams}
\Crefname{diagram}{Diagram}{Diagrams}
\crefname{formula}{Formula}{Formulas}
\Crefname{formula}{Formula}{Formulas}
\newcommand\pfun{\mathrel{\ooalign{\hfil$\mapstochar\mkern5mu$\hfil\cr$\to$\cr}}}
\newcommand{\cmark}{\ding{51}}%
\newcommand{\xmark}{\ding{55}}%
\newcommand{\FOtwo}{$\mathsf{FO}^{2}$}
\newcommand{\FOthree}{$\mathsf{FO}^{3}$}
\newcommand{\SFO}{$\mathsf{S}^{2}\mathsf{FO}^{2}$}
\newcommand{\SRU}{$\mathsf{S}^{2}\mathsf{RU}$}
\newcommand{\Uone}{$\mathsf{U}_{1}$}
\newcommand{\Ctwo}{$\mathsf{C}^{2}$}
\newcommand{\IFO}{$\mathsf{I}\mathsf{FO}^{2}$}
\newcommand\mdoubleplus{\mathbin{+\mkern-10mu+}}
\newcommand{\Done}{\domainNames_{1}}
\newcommand{\Dtwo}{\domainNames_{2}}
\newcommand{\Dthree}{\domainNames_{3}}
\newcommand{\predicate}{\texttt{\textup{p}}}
\newcommand{\predicateq}{\texttt{\textup{q}}}
\newcommand{\predicates}{\texttt{\textup{s}}}
\newcommand{\predicater}{\texttt{\textup{r}}}
\newcommand{\nulll}{\texttt{\textup{null}}}
\DeclareMathOperator{\friends}{\texttt{\textup{friends}}}
\DeclareMathOperator{\smokes}{\texttt{\textup{smokes}}}
\DeclareMathOperator{\CR}{\textsc{CR}}
\DeclareMathOperator{\GDR}{\textsc{GDR}}
\DeclareMathOperator{\Reff}{\textsc{Ref}}
\DeclareMathOperator{\dom}{dom}
\DeclareMathOperator{\Doms}{Doms}
\DeclareMathOperator{\Vars}{Vars}
\DeclareMathOperator{\Preds}{Preds}
\newcommand{\nosemic}{\renewcommand{\@endalgocfline}{\relax}}
\newcommand{\dosemic}{\renewcommand{\@endalgocfline}{\algocf@endline}}
\begin{document}

\maketitle

\begin{abstract}
  First-order model counting (FOMC) is a computational problem that asks to
  count the models of a sentence in finite-domain first-order logic. In this
  paper, we argue that the capabilities of FOMC algorithms to date are limited
  by their inability to express many types of recursive computations. To enable
  such computations, we relax the restrictions that typically accompany domain
  recursion and generalise the circuits used to express a solution to an FOMC
  problem to directed graphs that may contain cycles. To this end, we adapt the
  most well-established (weighted) FOMC algorithm \textsc{ForcLift} to work with
  such graphs and introduce new compilation rules that can create cycle-inducing
  edges that encode recursive function calls. These improvements allow the
  algorithm to find efficient solutions to counting problems that were
  previously beyond its reach, including those that cannot be solved efficiently
  by any other exact FOMC algorithm. We end with a few conjectures on what
  classes of instances could be domain-liftable as a result.
\end{abstract}

\section{Introduction}\label{sec:introduction}

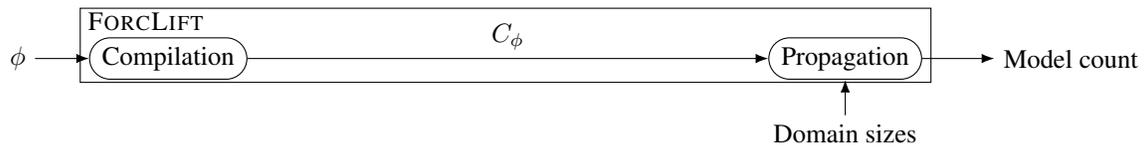
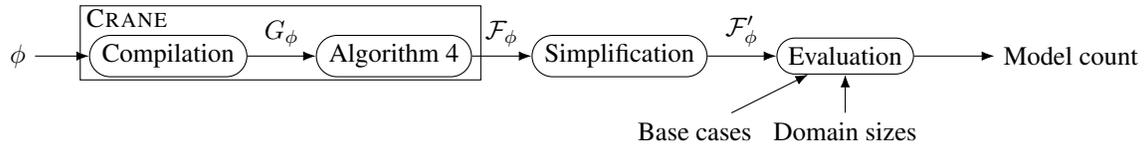
\begin{figure*}[t]
  \centering
  \begin{subfigure}{\textwidth}
    \centering
    \begin{tikzpicture}
      \node at (-2, 0) (start) {$\phi$};
      \node[draw,rounded rectangle] at (0, 0) (kc) {Compilation};
      \node[draw,rounded rectangle] at (9, 0) (evaluation) {Propagation};
      \node at (12, 0) (end) {Model count};
      \node at (9, -1) (sizes) {Domain sizes};
      \node at (-0.3, 0.5) {\textsc{ForcLift}};
      \node[draw,fit={(kc) (evaluation)},inner ysep=6pt,yshift=5pt] {};
      \draw[-Latex] (start) -- (kc);
      \draw[-Latex] (kc) -- node[above] {$C_{\phi}$} (evaluation);
      \draw[-Latex] (evaluation) -- (end);
      \draw[-Latex] (sizes) -- (evaluation);
    \end{tikzpicture}
    \caption{\textsc{ForcLift} compiles $\phi$ into a circuit $C_{\phi}$. The
      domain sizes are then used to propagate values through $C_{\phi}$,
      computing the model count.}
  \end{subfigure}
  \begin{subfigure}{\textwidth}
    \centering
    \begin{tikzpicture}
      \node at (-2, 0) (start) {$\phi$};
      \node[draw,rounded rectangle] at (0, 0) (kc) {Compilation};
      \node[draw,rounded rectangle] at (3, 0) (interpretation) {\Cref{alg:interpretation}};
      \node[draw,rounded rectangle] at (6, 0) (simplification) {Simplification};
      \node[draw,rounded rectangle] at (9, 0) (evaluation) {Evaluation};
      \node at (12, 0) (end) {Model count};
      \node at (7, -1) (base) {Base cases};
      \node at (9, -1) (sizes) {Domain sizes};
      \node at (-0.55, 0.5) {\textsc{Crane}};
      \node[draw,fit={(kc) (interpretation)},inner ysep=6pt,yshift=5pt] {};
      \draw[-Latex] (start) -- (kc);
      \draw[-Latex] (kc) -- node[above] {$G_{\phi}$} (interpretation);
      \draw[-Latex] (interpretation) -- node[above] {$\mathcal{F}_{\phi}$} (simplification);
      \draw[-Latex] (simplification) -- node[above] {$\mathcal{F}_{\phi}'$} (evaluation);
      \draw[-Latex] (evaluation) -- (end);
      \draw[-Latex] (base) -- (evaluation);
      \draw[-Latex] (sizes) -- (evaluation);
    \end{tikzpicture}
    \caption{\textsc{Crane} compiles $\phi$ into a graph $G_{\phi}$ and then
      converts $G_{\phi}$ into a collection of functions $\mathcal{F}_{\phi}$.
      In some cases, these functions can benefit from algebraic simplification.
      If some functions are defined recursively, base cases need to be
      established. One can then compute the model count of $\phi$ by running the
      main function in $\mathcal{F}'_{\phi}$ with domain sizes as
      arguments.}\label{fig:processcrane}
  \end{subfigure}
  \caption{A comparison of how \textsc{ForcLift} and \textsc{Crane} can be used
    to compute the model count of a formula $\phi$}\label{fig:process}
\end{figure*}


\emph{First-order model counting} (FOMC) is the problem of computing the number
of models of a sentence in first-order logic given the size(s) of its
domain(s)~\cite{DBLP:conf/pods/BeameBGS15}. \emph{Symmetric weighted FOMC}
(WFOMC) extends FOMC with (pairs of) weights on predicates and asks for a
weighted sum across all models instead. By fixing the sizes of the domains, a
WFOMC instance can be rewritten as an instance of (propositional) weighted model
counting~\cite{DBLP:journals/ai/ChaviraD08}. WFOMC emerged as the dominant
approach to \emph{lifted (probabilistic) inference}. Lifted inference techniques
exploit symmetries in probabilistic models by reasoning about sets rather than
individuals~\cite{DBLP:conf/ecai/Kersting12}. By doing so, many instances become
solvable in polynomial time~\cite{DBLP:conf/nips/Broeck11}. The development of
lifted inference algorithms coincided with work on probabilistic relational
models that combine the syntactic power of first-order logic with probabilistic
models such as Bayesian and Markov networks, allowing for a more relational view
of uncertainty
modelling~\cite{DBLP:series/synthesis/2016Raedt,DBLP:journals/ml/KimmigMG15,DBLP:journals/ml/RichardsonD06}.
Lifted inference techniques for probabilistic databases have also been inspired
by
WFOMC~\cite{DBLP:journals/pvldb/GatterbauerS15,DBLP:journals/debu/GribkoffSB14}.
While WFOMC has received more attention in the literature, FOMC is an
interesting problem in and of itself because of its connections to finite model
theory~\cite{DBLP:conf/kr/BremenK21} and applications in enumerative
combinatorics~\cite{DBLP:conf/ilp/BarvinekB0ZK21}.



Traditionally in computational complexity theory, a problem is \emph{tractable}
if it can be solved in time polynomial in the instance size. The equivalent
notion in (W)FOMC is liftability. A (W)FOMC instance is \emph{(domain-)liftable}
if it can be solved in time polynomial in the size(s) of the
domain(s)~\cite{DBLP:conf/starai/JaegerB12}. Many classes of instances are known
to be liftable. First, Van den Broeck~\shortcite{DBLP:conf/nips/Broeck11} showed
that the class of all sentences of first-order logic with up to two variables
(denoted \FOtwo{}) is liftable. Then Beame et
al.~\shortcite{DBLP:conf/pods/BeameBGS15} proved that there exists a sentence
with three variables for which FOMC is $\#\P_{1}$-complete (i.e., \FOthree{} is
not liftable). Since these two seminal results, most research on (W)FOMC focused
on developing faster solutions for the \FOtwo{}
fragment~\cite{DBLP:conf/uai/BremenK21,DBLP:conf/aaai/MalhotraS22} and defining
new liftable fragments. These fragments include \SFO{} and
\SRU{}~\cite{DBLP:conf/nips/KazemiKBP16},
\Uone{}~\cite{DBLP:conf/lics/KuusistoL18}, \Ctwo{} (i.e., the two-variable
fragment with counting
quantifiers)~\cite{DBLP:journals/jair/Kuzelka21,DBLP:conf/aaai/MalhotraS22}, and
\Ctwo{} extended with axioms for trees~\cite{DBLP:conf/kr/BremenK21}. On the
empirical front, there are several implementations of exact WFOMC algorithms:
\textsc{Alchemy}~\cite{DBLP:journals/cacm/GogateD16},
\textsc{FastWFOMC}~\cite{DBLP:conf/uai/BremenK21},
\textsc{ForcLift}~\cite{DBLP:conf/ijcai/BroeckTMDR11}, and
\textsc{L2C}~\cite{DBLP:conf/kr/KazemiP16}. Approximate counting is supported by
\textsc{Alchemy}, \textsc{ApproxWFOMC}~\cite{DBLP:conf/ijcai/BremenK20},
\textsc{ForcLift}~\cite{DBLP:conf/uai/BroeckCD12},
\textsc{Magician}~\cite{DBLP:conf/aaai/VenugopalSG15}, and
\textsc{Tuffy}~\cite{DBLP:journals/pvldb/NiuRDS11}.


We claim that the capabilities of (W)FOMC algorithms can be significantly
extended by empowering them with the ability to construct recursive solutions.
The topic of recursion in the context of WFOMC has been studied before but in
limited ways. Barv{\'{\i}}nek et al.~\shortcite{DBLP:conf/ilp/BarvinekB0ZK21}
use WFOMC to generate numerical data that is then used to conjecture recurrence
relations that explain that data. Van den
Broeck~\shortcite{DBLP:conf/nips/Broeck11} introduced the idea of \emph{domain
  recursion}. Intuitively, domain recursion partitions a domain of size $n$ into
a single explicitly named constant and the remaining domain of size $n-1$.
However, many stringent conditions are enforced to ensure that the search for a
tractable solution always terminates.



In this work, we show how to relax these restrictions in a way that results in
an algorithm capable of handling more instances in a lifted manner. The ideas
presented in this paper are implemented in
\textsc{Crane}\footnote{\url{https://github.com/dilkas/crane}}\textsuperscript{,}\footnote{\url{https://doi.org/10.5281/zenodo.8004077}}---an
extension of the (W)FOMC algorithm \textsc{ForcLift}. The differences in how
these algorithms operate are depicted in \cref{fig:process}. Compilation is
performed by applying various \emph{(compilation) rules} to the input (or some
derivative) formula, gradually constructing a circuit (in the case of
\textsc{ForcLift}) or a graph (in the case of \textsc{Crane}). \textsc{ForcLift}
applies compilation rules via greedy search, whereas \textsc{Crane} also
supports a hybrid search algorithm that applies some rules greedily and some
using breadth-first search.\footnote{In the current implementation, the rules
  applied non-greedily are: atom counting, inclusion-exclusion, independent
  partial groundings, Shannon decomposition, shattering, and two new rules
  described in \cref{sec:dr,sec:ref}. See previous work for more information
  about the rules~\cite{DBLP:conf/ijcai/BroeckTMDR11} and the search
  algorithm~\cite{dilkas2023generalising}.} This alternative was introduced
because there is no reason to expect greedy search to be complete. Another
difference is that---in \textsc{Crane}---the product of compilation is not
directly evaluated but transformed into a collection of functions on domain
sizes. Hence, our approach is reminiscent of previous work on lifted inference
via compilation to C++ programs~\cite{DBLP:conf/kr/KazemiP16} and the broader
area of functional
synthesis~\cite{DBLP:conf/cav/GoliaRM20,DBLP:conf/pldi/KuncakMPS10}.


Using labelled directed graphs instead of circuits enables \textsc{Crane} to
construct recursive solutions by representing recursive function calls via
cycle-inducing edges. A hypothetical instance of compilation could proceed as
follows. Suppose the input formula $\phi$ depends on a domain of size
$n \in \mathbb{N}_{0}$. \emph{Generalised domain recursion} (GDR)---one of the
new compilation rules---transforms $\phi$ into a different formula $\psi$ with
an additional constant and some \emph{constraints}. After some more
transformations, the constraints in $\psi$ can be removed, replacing the domain
of size $n$ with a new domain of size $n-1$---this is the responsibility of the
\emph{constraint removal} (CR) compilation rule. Afterwards, another compilation
rule recognises that the resulting formula matches the input formula $\phi$
except for referring to a different domain. This observation allows us to add a
cycle-forming edge to the graph, which can be interpreted as a function $f$
relying on $f(n-1)$ to compute $f(n)$.


We begin by introducing some notation, terminology, and the problem of FOMC in
\cref{sec:recprelims}. Then, in \cref{sec:methods}, we define the graphs that
replace circuits in representing a solution to such a problem. \Cref{sec:rules}
introduces the new compilation rules. \Cref{sec:interpret} describes an
algorithm that converts such a graph into a collection of (potentially
recursive) functions. \Cref{sec:results} compares \textsc{ForcLift} and
\textsc{Crane} on various counting problems. We show that:
\begin{enumerate*}[label=(\roman*)]
  \item \textsc{Crane} performs as well as \textsc{ForcLift} on the instances
  that were already solvable by \textsc{ForcLift},
  \item \textsc{Crane} is also able to handle most of the instances that
  \textsc{ForcLift} fails on, \emph{including those outside of currently-known
    domain-liftable fragments} such as \Ctwo{}.
\end{enumerate*}
Finally, \cref{sec:conclusion} outlines some conjectures and directions for
future work.

\section{Preliminaries}\label{sec:recprelims}

Our representation of FOMC instances builds on the format used by
\textsc{ForcLift}, some aspects of which are described by Van den Broeck et
al.~\shortcite{DBLP:conf/ijcai/BroeckTMDR11}. \textsc{ForcLift} can translate
sentences in a variant of function-free many-sorted first-order logic with
equality to this internal format. We use lowercase Latin letters for predicates
(e.g., $\predicate$) and constants (e.g., $x$), uppercase Latin letters for
variables (e.g., $X$), and uppercase Greek letters for domains (e.g., $\Delta$).
Sometimes we write predicate $\predicate$ as $\predicate/n$, where
$n \in \mathbb{N}^{+}$ is the \emph{arity} of $\predicate$. An \emph{atom} is
$\predicate(x_1, \dots, x_n)$ for some predicate $\predicate/n$ and terms
$x_{1}, \dots, x_{n}$. A \emph{term} is either a constant or a variable. A
\emph{literal} is either an atom or the negation thereof (denoted by
$\neg \predicate(x_1, \dots, x_n)$). Let $\mathcal{D}$ be the set of all
relevant domains. Initially, $\mathcal{D}$ contains all domains mentioned by the
input formula. During compilation, new domains are added to $\mathcal{D}$. Each
new domain is interpreted as a subset of another domain in $\mathcal{D}$.

We write $\langle\cdot\rangle$ for lists, $|\cdot|$ for the length of a list,
and $\mdoubleplus$ for list concatenation. We write $\pfun$ to denote partial
functions and $\dom(\cdot)$ for the domain of a function. Let $\Doms$
(respectively, $\Vars$) be the function that maps any expression to the set of
domains (respectively, variables) used in it. Let $S$ be a set of constraints or
literals, $V$ a set of variables, and $x$ a term. We write $S[x/V]$ to denote
$S$ with all occurrences of all variables in $V$ replaced with $x$.

\begin{definition}[Constraint]\label{def:constraint}
  An \emph{(inequality) constraint} is a pair $(a, b)$, where $a$ is a variable,
  and $b$ is a term. It constrains $a$ and $b$ to be different.
\end{definition}

\begin{definition}[Clause]\label{def:clause}
  A \emph{clause} is $c = (L, E, \delta_c)$, where $L$ is a set of literals, $E$
  is a set of constraints, and $\delta_c$ is the domain map. \emph{Domain map}
  $\delta_{c}\colon \Vars(c) \to \mathcal{D}$ is a function that maps all
  variables in $c$ to their domains such that (s.t.) if $(X, Y) \in E$ for some
  variables $X$ and $Y$, then $\delta_c(X) = \delta_c(Y)$. For convenience, we
  sometimes write $\delta_c$ for the domain map of $c$ without unpacking $c$
  into its three constituents.
\end{definition}

\begin{definition}[Formula]\label{def:formula}
  A \emph{formula} is a set of clauses s.t.\ all constraints and atoms `type
  check' with respect to domains.
\end{definition}

\begin{example}\label{example:first}
  Let $\phi \coloneqq \{\, c_1, c_2 \,\}$ be a formula with clauses
  \begin{align*}
    c_1 &\coloneqq
          \begin{multlined}[t]
            (\{\, \neg \predicate(X, Y), \neg \predicate(X, Z) \,\}, \{\, (Y, Z) \,\}, \\
            \{\, X \mapsto \Gamma, Y \mapsto \Delta, Z \mapsto \Delta \,\}),
          \end{multlined}\\
    c_2 &\coloneqq
          \begin{multlined}[t]
            (\{\, \neg \predicate(X, Y), \neg \predicate(Z, Y) \,\}, \{\, (X, Z) \,\}, \\
            \{\, X \mapsto \Gamma, Y \mapsto \Delta, Z \mapsto \Gamma \,\})
          \end{multlined}
  \end{align*}
  for some predicate $\predicate/2$, variables $X$, $Y$, $Z$, and domains
  $\Gamma$ and $\Delta$. All variables that occur as the first argument to
  $\predicate$ are in $\Gamma$, and, likewise, all variables that occur as the
  second argument to $\predicate$ are in $\Delta$. Therefore, $\phi$ is a valid
  formula.
\end{example}

One can read such a formula as a sentence in first-order logic. All variables in
a clause are implicitly universally quantified, and all clauses in a formula are
implicitly linked by a conjunction. Thus, formula $\phi$ from
\cref{example:first} reads as
\begin{equation}\label[formula]{eq:formula}
  \begin{split}
    &\begin{multlined}[t]
      (\forall X \in \Gamma\text{. }\forall Y, Z \in \Delta\text{. }\\
      Y \ne Z \Rightarrow \neg \predicate(X, Y) \lor \neg \predicate(X, Z)) \land{}
    \end{multlined}\\
    &\begin{multlined}[t]
      (\forall X, Z \in \Gamma\text{. }\forall Y \in \Delta\text{. }\\
      X \ne Z \Rightarrow \neg \predicate(X, Y) \lor \neg \predicate(Z, Y)).
    \end{multlined}
  \end{split}
\end{equation}

There are two differences between \cref{def:constraint,def:clause,def:formula}
and the corresponding concepts by Van den Broeck et
al.~\shortcite{DBLP:conf/ijcai/BroeckTMDR11}.\footnote{Van den Broeck et
  al.~\shortcite{DBLP:conf/ijcai/BroeckTMDR11} refer to clauses and formulas as
  c-clauses and c-theories, respectively.} First, we decouple variable-to-domain
assignments from constraints and move them to a separate function $\delta_{c}$
in \cref{def:clause}. Second, while they allow for equality constraints and
constraints of the form $X \not\in \Delta$ for some variable $X$ and domain
$\Delta$, we exclude these constraints simply because they are inessential. Note
that if we replace $Y \ne Z$ in \cref{eq:formula} with $Y = Z$, then
\cref{eq:formula} can be simplified to have one fewer variables. Similarly, if
the same inequality is replaced by $Y = x$ for some constant $x$, then $Y$ can
be eliminated as well. Since constraints are always interpreted as preconditions
for the disjunction of literals in the clause (as in \cref{eq:formula}),
equality constraints can be eliminated without any loss in expressivity.

\begin{example}\label{example:simple}
  Let $\Delta$ be a domain of size $n \in \mathbb{N}_{0}$. The model count of
  $\forall X \in \Delta\text{. } \predicate(X) \lor \predicateq(X)$ is then
  $3^{n}$. Intuitively, since both predicates are of arity one, they can be
  interpreted as subsets of $\Delta$. Thus, the formula says that each element
  of $\Delta$ has to be in $\predicate$ or $\predicateq$ or both.
\end{example}

\begin{example}\label{example:smokers}
  Consider a variant of the well-known `friends and smokers' example
  $\forall X, Y \in \Delta\text{.
  } \smokes(X) \land \friends(X, Y) \Rightarrow \smokes(Y)$. Letting
  $n \coloneqq |\Delta|$ as before, the model count is
  $\sum_{k=0}^{n} \binom{n}{k}2^{n^{2} - k(n-k)}$~\cite{DBLP:conf/kr/BroeckMD14}.
\end{example}

Let $\sigma\colon \mathcal{D} \to \mathbb{N}_{0}$ be the \emph{domain size
  function} that maps each domain to a non-negative integer.


\begin{example}
  Let $\phi$ be as in \cref{example:first} and $\Gamma = \Delta = \{\,1, 2\,\}$,
  i.e., $\sigma(\Gamma) = \sigma(\Delta) = 2$. There are $2^{2 \times 2} = 16$
  possible relations between $\Gamma$ and $\Delta$, i.e., subsets of
  $\Gamma \times \Delta$. Let us count how many of them satisfy the conditions
  imposed on predicate $\predicate$. The empty relation does. All four relations
  of cardinality one (e.g., $\{\, (1, 1) \,\}$) do too. Finally, there are two
  relations of cardinality two---$\{\, (1, 1), (2, 2) \,\}$ and
  $\{\, (1, 2), (2, 1) \,\}$---that satisfy the conditions as well. Therefore,
  the FOMC of $(\phi, \sigma)$ is 7. Incidentally, the FOMC of $\phi$ counts
  partial injections from $\Gamma$ to $\Delta$. We will continue to use the
  problem of counting partial injections as the main running example.
\end{example}

\section{First-Order Computational Graphs}\label{sec:methods}

\begin{figure}[t]
  \centering
  \begin{forest}
    for tree={s sep=10mm, sn edges}
    [$\bigwedge_{x \in \Delta}$,ellipse,draw,label=right:{\color{blue}{$3^{n}$}}
    [$\lor$,ellipse,draw,label=right:{\color{blue}{$2 + 1 = 3$}}
    [$\land$,ellipse,draw,label=left:{\color{blue}{$2 \times 1 = 2$}}
    [$p(x) \lor \neg p(x)$,rectangle,draw,fill=gray!25,label=below:{\color{blue}{2}}]
    [$\land$,ellipse,draw,label={[label distance=0cm,text=blue]87:$1 \times 1 = 1$}
    [$q(x)$,rectangle,draw,fill=gray!25,label=below:{\color{blue}{1}}]
    [$\top$,rectangle,draw,fill=gray!25,label=below:{\color{blue}{1}}]
    ]
    ]
    [$\land$,ellipse,draw,label=right:{\color{blue}{$1 \times 1 = 1$}}
    [$\neg q(x)$,rectangle,draw,fill=gray!25,label=below:{\color{blue}{1}}]
    [$p(x)$,rectangle,draw,fill=gray!25,label=below:{\color{blue}{1}}]
    ]
    ]
    ]
  \end{forest}
  \caption{A circuit produced by \textsc{ForcLift} for \cref{example:simple}.
    Values and computations in blue (on the outside of each node) show how a
    bottom-up evaluation of the circuit computes the model
    count.}\label{fig:simplecircuit}
\end{figure}
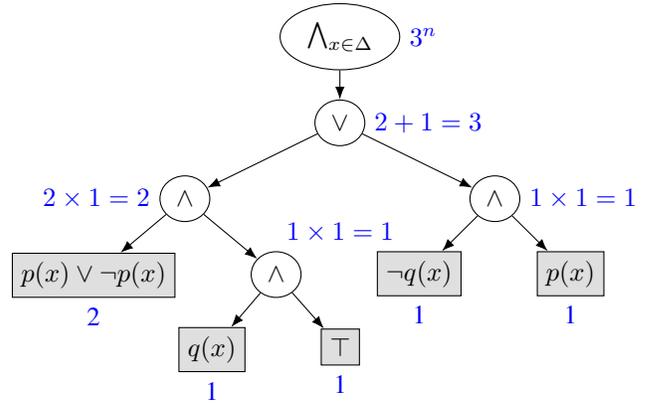

Darwiche~\shortcite{DBLP:journals/jancl/Darwiche01} introduced
\emph{deterministic decomposable negation normal form} (d-DNNF) circuits for
propositional knowledge compilation and showed that the model count of a
propositional formula can be computed in time linear in the size of the circuit.
Van den Broeck et al.~\shortcite{DBLP:conf/ijcai/BroeckTMDR11} generalised them
to first-order logic via \emph{first-order d-DNNF (FO d-DNNF) circuits}. FO
d-DNNF circuits (hereafter called \emph{circuits}) are directed acyclic graphs
with nodes corresponding to formulas in first-order logic---see
\cref{fig:simplecircuit} for an example. The following types of nodes are
supported by \textsc{ForcLift}: caching ($\Reff$), contradiction ($\bot$),
tautology ($\top$), decomposable conjunction ($\land$), decomposable
set-conjunction ($\bigwedge$), deterministic disjunction ($\lor$), deterministic
set-disjunction ($\bigvee$), domain recursion, grounding, inclusion-exclusion,
smoothing, and unit clause. We refer the reader to previous
work~\cite{DBLP:conf/nips/Broeck11,DBLP:conf/ijcai/BroeckTMDR11} for more
information about node types and their interpretations for computing the
(W)FOMC\@.

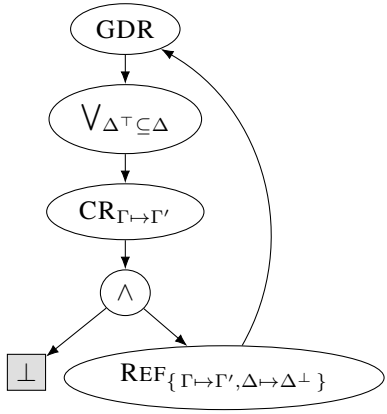
\begin{figure}[t]
  \centering
  \begin{forest}
    for tree={sn edges}
    [$\GDR$,ellipse,draw,name=gdr
    [$\bigvee_{\Delta^\top \subseteq \Delta}$,ellipse,draw
    [$\CR_{\Gamma \mapsto \Gamma'}$,ellipse,draw
    [$\land$,ellipse,draw
    [$\bot$,rectangle,draw,fill=gray!25]
    [$\Reff_{\{\, \Gamma \mapsto \Gamma', \Delta \mapsto \Delta^{\bot} \,\}}$,ellipse,draw,name=ref]
    ]
    ]
    ]
    ]
    \draw[-Latex,bend right=45] (ref) to (gdr);
  \end{forest}
  \caption{A simplified version of the FCG constructed by \textsc{Crane} for the
    problem of counting partial injections from \cref{example:first}. Here we
    omit some parameters as well as nodes whose only arithmetic effect is
    multiplication by one.}\label{fig:examplefcg}
\end{figure}

We introduce \emph{first-order computational graphs} (FCGs) that generalise
circuits by dispensing with acyclicity. An FCG is a (weakly connected) directed
graph with a single source, node labels, and ordered outgoing edges. Node labels
consist of two parts: the \emph{type} and the \emph{parameters}. The type of a
node determines its out-degree. We make the following changes to the node types
already supported by \textsc{ForcLift}. First, we introduce a new type for
constraint removal ($\CR$). Second, we replace domain recursion with generalised
domain recursion ($\GDR$). And third, for reasoning about partially-constructed
FCGs, we write $\star$ for a placeholder type that is yet to be replaced. We
write $T_p$ for an FCG that has a node with label $T_p$ (i.e., type $T$ and
parameter(s) $p$) and $\star$'s as all of its direct successors. We also write
$T_p(v)$ for an FCG with one edge from a node labelled $T_{p}$ to another node
$v$. See \cref{fig:examplefcg} for an example FCG\@.

Finally, we introduce a structure that represents a solution while it is still
being built. A \emph{chip} is a pair $(G, L)$, where $G$ is an FCG, and $L$ is a
list of formulas, s.t. $|L|$ is equal to the number of $\star$'s in $G$. $L$
contains formulas that still need to be compiled. Once a formula is compiled, it
replaces one of the $\star$'s in $G$ according to a set order. We call an FCG
\emph{complete} (i.e., it represents a \emph{complete solution}) if it has no
$\star$'s. Similarly, a chip is complete if its FCG is complete.

\section{New Compilation Rules}\label{sec:rules}

A \emph{(compilation) rule} takes a formula and returns a set of chips. The
cardinality of this set is the number of ways the rule can be applied to the
input formula. While \textsc{ForcLift}~\cite{DBLP:conf/ijcai/BroeckTMDR11}
heuristically chooses one of them, in an attempt to not miss a solution,
\textsc{Crane} returns them all. In particular, if a rule returns an empty set,
then that rule does not apply to the formula.

\subsection{Generalised Domain Recursion}\label{sec:dr}

The main idea behind domain recursion (the original version by Van den
Broeck~\shortcite{DBLP:conf/nips/Broeck11} and the one presented here) is as
follows. Let $\Omega \in \mathcal{D}$ be a domain. Assuming that
$\Omega \ne \emptyset$, pick some $x \in \Omega$. Then, for every variable
$X \in \Omega$ that occurs in a literal, consider two possibilities: $X = x$ and
$X \ne x$.

\begin{example}\label{example:dr}
  Let $\phi$ be a formula with a single clause
  \begin{multline*}
    (\{\, \neg \predicate(X, Y), \neg \predicate(X, Z) \,\}, \{\, (Y, Z) \,\}, \\
    \{\, X \mapsto \Gamma, Y \mapsto \Delta, Z \mapsto \Delta \,\}).
  \end{multline*}
  Then we can introduce constant $x \in \Gamma$ and rewrite $\phi$ as
  $\phi' = \{\, c_{1}, c_{2} \,\}$, where
  \begin{align*}
    c_{1} &\coloneqq \begin{multlined}[t]
      (\{\, \neg \predicate(x, Y), \neg \predicate(x, Z) \,\}, \{\, (Y, Z) \,\}, \\
      \{\, Y \mapsto \Delta, Z \mapsto \Delta \,\}),
      \end{multlined}\\
    c_{2} &\coloneqq \begin{multlined}[t]
      (\{\, \neg \predicate(X, Y), \neg \predicate(X, Z) \,\}, \{\, (X, x), (Y, Z) \,\}, \\
      \{\, X \mapsto \Gamma', Y \mapsto \Delta, Z \mapsto \Delta \,\}),
      \end{multlined}
  \end{align*}
  and $\Gamma' \coloneqq \Gamma \setminus \{\, x \,\}$.
\end{example}

Van den Broeck~\shortcite{DBLP:conf/nips/Broeck11} imposes stringent conditions
on the input formula to ensure that its expanded version (as in
\cref{example:dr}) can be handled efficiently. \Cref{example:first} cannot be
handled by \textsc{ForcLift} because there is no root binding class, i.e., the
two root variables belong to different equivalence classes with respect to the
binding relationship.


\begin{algorithm}[t]
  \caption{The compilation rule for $\GDR$ nodes}\label{alg:domainrecursion}
  \KwIn{formula $\phi$, set of all relevant domains $\mathcal{D}$}
  \KwOut{set of chips $S$}
  $S \gets \emptyset$\;
  \ForEach{domain $\Omega \in \mathcal{D}$ s.t.\ there is $c \in \phi$ and $X \in \Vars(L_c)$ s.t. $\delta_c(X) = \Omega$\label{line:condition}}{
    $\phi' \gets \emptyset$\;
    $x \gets \text{a new constant in domain } \Omega$\;\label{line:constant}
    \ForEach{clause $c = (L, E, \delta) \in \phi$\label{line:forclause}}{
      $V \gets \{\, X \in \Vars(L) \mid \delta(X) = \Omega \,\}$\;\label{line:V}
      \ForEach{$W \subseteq V$ s.t. $W^2 \cap E = \emptyset$ {\bf and} $W \cap \{\, X \in \Vars(E) \mid (X, y) \in E \text{ for some constant } y \,\} = \emptyset$\label{line:conditions}}{
        \tcc{$\delta'$ restricts $\delta$ to the new set of variables}
        $\phi' \gets \phi' \cup \{\, (L[x/W], E[x/W] \cup \{\, (X, x) \mid (X \in V \setminus W) \,\}, \delta') \,\}$\;\label{line:generation}
      }
    }
    $S \gets S \cup \{\, (\GDR, \langle\phi'\rangle) \,\}$\;
  }
\end{algorithm}

In contrast, GDR has only one precondition: for GDR to be applicable to domain
$\Omega \in \mathcal{D}$, there must be at least one variable with domain
$\Omega$ featured in a literal (and not just in constraints). Without such
variables, GDR would have no effect on the formula. The expanded formula is then
left as-is to be handled by other compilation rules. Typically, after a few more
rules are applied, a combination of $\CR$ and $\Reff$ nodes introduces a
cycle-inducing edge back to the $\GDR$ node, thus completing the definition of a
recursive function. The GDR compilation rule is summarised as
\cref{alg:domainrecursion} and explained in more detail using the example below.

\begin{example}
  Let $\phi \coloneqq \{\, c_1, c_2 \,\}$ be the formula from
  \cref{example:first}. While GDR is possible on both domains, we illustrate how
  it works on $\Gamma$. The algorithm iterates over the clauses of $\phi$.
  Suppose \cref{line:forclause} picks $c = c_1$ as the first clause. Then, set
  $V$ is constructed to contain all variables with domain $\Omega = \Gamma$ that
  occur in the literals of clause $c$. In this case, $V = \{\, X \,\}$.

  \Cref{line:conditions} iterates over all subsets $W \subseteq V$ of variables
  that can be replaced by a constant without resulting in evidently
  unsatisfiable formulas. We impose two restrictions on $W$. First,
  $W^2 \cap E = \emptyset$ ensures that no pairs of variables in $W$ are
  constrained to be distinct since that would result in an $x \ne x$ constraint
  after substitution. Similarly, we want to avoid variables in $W$ that have
  inequality constraints with constants. In this case, both subsets of $V$
  satisfy the conditions, and \cref{line:generation} generates two clauses:
  \begin{multline*}
    (\{\, \neg \predicate(X, Y), \neg \predicate(X, Z) \,\}, \{\, (Y, Z), (X, x) \,\}, \\
    \{\, X \mapsto \Gamma, Y \mapsto \Delta, Z \mapsto \Delta \,\}),
  \end{multline*}
  from $W = \emptyset$ and
  \[
    (\{\, \neg \predicate(x, Y), \neg \predicate(x, Z) \,\}, \{\, (Y, Z) \,\}, \{\, Y \mapsto \Delta, Z \mapsto \Delta \,\})
  \]
  from $W = V$.

  When \cref{line:forclause} picks $c = c_2$, we have $V = \{\, X, Z \,\}$. The
  subset $W = V$ fails to satisfy the conditions on \cref{line:conditions}
  because of the $X \ne Z$ constraint. The other three subsets of $V$ all
  generate clauses for $\phi'$. Indeed, $W = \emptyset$ generates
  \begin{multline*}
    (\{\, \neg \predicate(X, Y), \neg \predicate(Z, Y) \,\}, \{\, (X, Z), (X, x), (Z, x) \,\}, \\
    \{\, X \mapsto \Gamma, Y \mapsto \Delta, Z \mapsto \Gamma \,\}),
  \end{multline*}
  $W = \{\, X \,\}$ generates
  \[
    (\{\, \neg \predicate(x, Y), \neg \predicate(Z, Y) \,\}, \{\, (Z, x) \,\}, \{\, Y \mapsto \Delta, Z \mapsto \Gamma \,\}),
  \]
  and $W = \{\, Z \,\}$ generates
  \[
    (\{\, \neg \predicate(X, Y), \neg \predicate(x, Y) \,\}, \{\, (X, x) \,\}, \{\, X \mapsto \Gamma, Y \mapsto \Delta \,\}).
  \]
\end{example}

\begin{theorem}[Correctness of $\GDR$]\label{thm:correctness1}
  Let $\phi$ be the formula used as input to \cref{alg:domainrecursion},
  $\Omega \in \mathcal{D}$ the domain selected on \cref{line:condition}, and
  $\phi'$ the formula constructed by the algorithm for $\Omega$. Suppose that
  $\Omega \ne \emptyset$. Then $\phi \equiv \phi'$.
\end{theorem}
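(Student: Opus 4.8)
The plan is to reduce the global equivalence $\phi \equiv \phi'$ to a clause-by-clause statement and then prove the latter by a case split on equality with the fresh constant $x$. Since a formula is a conjunction of (universally closed) clauses and since \cref{alg:domainrecursion} builds $\phi'$ clause by clause---the clauses generated in a given iteration of \cref{line:forclause} depend only on the clause $c$ of that iteration---it suffices to show that, for each $c = (L, E, \delta) \in \phi$, the original clause $c$ is logically equivalent to the conjunction of the clauses generated from it. Equivalence is a congruence for conjunction, so assembling these local equivalences yields $\phi \equiv \phi'$.

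For a fixed clause I would write $V = \{\, X \in \Vars(L) \mid \delta(X) = \Omega \,\}$ as on \cref{line:V} and read $c$ as $\forall \bar{Z}.\,(E \Rightarrow \bigvee L)$, where $\bar{Z}$ collects all of its variables. Because $\Omega \ne \emptyset$, the constant $x$ chosen on \cref{line:constant} denotes an element of $\Omega$, so for every $X \in V$ the cases $X = x$ and $X \ne x$ are exhaustive and mutually exclusive. The core step is the single-variable unrolling
\[
  \forall X \in \Omega.\,\Psi(X) \;\equiv\; \Psi(x) \,\land\, \bigl(\forall X \in \Omega.\,(X \ne x \Rightarrow \Psi(X))\bigr),
\]
which is valid precisely because $x \in \Omega$. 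Applying it once per variable of $V$---most cleanly by induction on $|V|$---produces $2^{|V|}$ branches indexed by the subset $W \subseteq V$ of variables placed in the $X = x$ case. In each branch, substituting $x$ for the variables of $W$ turns the literals into $L[x/W]$ and the constraints into $E[x/W]$, while the $X \ne x$ cases contribute $\{\, (X, x) \mid X \in V \setminus W \,\}$; this is exactly the clause emitted on \cref{line:generation}, with $\delta'$ the restriction of $\delta$ to the surviving variables. A small amount of care is needed to check that each emitted clause type-checks, so that \cref{def:formula} is met, and that substitution commutes with the constraint and literal bookkeeping.

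It then remains to account for the subsets $W$ that \cref{line:conditions} omits and to argue that dropping them leaves the conjunction unchanged. The claim is that each omitted branch is a tautology---its precondition $E[x/W]$ is unsatisfiable---so it may be deleted freely. If $W^{2} \cap E \ne \emptyset$, some pair $X, X' \in W$ carries the constraint $(X, X') \in E$, which substitution turns into the unsatisfiable $x \ne x$, making the implication vacuously true. The analogous argument is intended to handle a variable $X \in W$ bound by a constraint $(X, y)$ with $y$ a constant, whose substitution produces a degenerate constant-versus-constant constraint.

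I expect this last point---justifying the second condition on \cref{line:conditions}, i.e.\ why excluding variables constrained against a constant is both sound and necessary---to be the main obstacle, since it is exactly where the informal reading of constraints as clause preconditions must be made fully precise and where the interpretation of the fresh constant $x$ relative to the existing constants actually bears on the count. I would therefore isolate it as a lemma stating that precisely the branches forbidden on \cref{line:conditions} are tautological and that every admitted branch is faithfully captured by the emitted clause; everything else is then routine propagation of the unrolling identity through the conjunction.
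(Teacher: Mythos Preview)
Your proposal is correct and follows essentially the same route as the paper: reduce $\phi \equiv \phi'$ to a per-clause statement, iterate the single-variable unrolling $\forall X \in \Omega.\,\Psi \equiv \Psi[x/X] \land (\forall X \in \Omega.\, X \ne x \Rightarrow \Psi)$ over the variables in $V$ to obtain one branch per subset $W \subseteq V$, and identify the subsets excluded on \cref{line:conditions} with tautological branches. The paper's own proof is in fact terser on that last point---it dispatches the excluded subsets in a parenthetical as ``some tautologies that the latter method skips''---so the obstacle you single out regarding the second filter is not spelled out there either.
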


\Cref{thm:correctness1} is equivalent to Proposition~3 by Van den
Broeck~\shortcite{DBLP:conf/nips/Broeck11}. For the proofs of this and other
theorems, see \cref{sec:proofs}.

\subsection{Constraint Removal}\label{sec:cr}

Recall that GDR on a domain $\Omega$ creates constraints of the form $X_i \ne x$
for some constant $x \in \Omega$ and family of variables $X_i \in \Omega$. Once
some conditions are satisfied, we can eliminate these constraints and replace
$\Omega$ with a new domain $\Omega' \coloneqq \Omega \setminus \{\, x \,\}$.
These conditions are that a constraint of the form $X \ne x$ exists for all
variables $X \in \Omega$ across all clauses, and such constraints are the only
place where $x$ occurs. We formalise the conditions as \cref{def:replaceable}.

\begin{definition}\label{def:replaceable}
  For a formula $\phi$, a pair $(\Omega, x)$ of a domain
  $\Omega \in \mathcal{D}$ and its element $x \in \Omega$ is called
  \emph{replaceable} if
  \begin{enumerate*}[label=(\roman*)]
    \item $x$ does not occur in any literal of any clause of $\phi$, and
    \item for each clause $c = (L, E, \delta_c) \in \phi$ and variable
    $X \in \Vars(c)$, either $\delta_c(X) \ne \Omega$ or $(X, x) \in E$.
  \end{enumerate*}
\end{definition}

\begin{algorithm}[t]
  \caption{The compilation rule for $\CR$ nodes}\label{alg:constraintremoval}
  \KwIn{formula $\phi$, set of all relevant domains $\mathcal{D}$}
  \KwOut{set of chips $S$}
  $S \gets \emptyset$\;
  \ForEach{replaceable pair $(\Omega \in \mathcal{D}, x \in \Omega)$\label{line:crconditions}}{
    add a new domain $\Omega'$ to $\mathcal{D}$\;\label{line:newdomain}
    $\phi' \gets \emptyset$\;
    \ForEach{clause $(L, E, \delta) \in \phi$}{
      $E' \gets \{\, (a, b) \in E \mid b \ne x \,\}$\;\label{line:constraintremoval}
      \nosemic$\delta' \gets X \mapsto
      \begin{cases}
        \Omega' & \text{if } \delta(X) = \Omega\\
        \delta(X) & \text{otherwise;}
      \end{cases}$\;\label{line:newdelta}
      $\phi' \gets \phi' \cup \{\, (L, E', \delta') \,\}$\;
    }
    $S \gets S \cup \{\, (\CR_{\Omega \mapsto \Omega'}, \langle\phi'\rangle) \,\}$\;
  }
\end{algorithm}

Once a replaceable pair is found, \cref{alg:constraintremoval} constructs the
new formula by removing constraints and defining a new domain map $\delta'$ that
replaces $\Omega$ with $\Omega'$.

\begin{example}
  Let $\phi = \{\, c_1, c_2 \,\}$ be a formula with clauses
  \begin{align*}
    c_1 &=
          \begin{multlined}[t]
            (\{\, \neg \predicate(X, Y), \neg \predicate(X, Z) \,\}, \{\, (X, x), (Y, Z) \,\}, \\
            \{\, X \mapsto \Gamma, Y \mapsto \Delta, Z \mapsto \Delta \,\}),
          \end{multlined}\\
    c_2 &=
          \begin{multlined}[t]
            (\{\, \neg \predicate(X, Y), \neg \predicate(Z, Y) \,\}, \\
            \{\, (X, x), (Z, X), (Z, x) \,\}, \\
            \{\, X \mapsto \Gamma, Y \mapsto \Delta, Z \mapsto \Gamma \,\}).
          \end{multlined}
  \end{align*}
  Domain $\Gamma$ and its element $x \in \Gamma$ satisfy the preconditions for
  CR\@. The rule introduces a new domain $\Gamma'$ and transforms $\phi$ to
  $\phi' = (c_1', c_2')$, where
  \begin{align*}
    c_1' &=
           \begin{multlined}[t]
             (\{\, \neg \predicate(X, Y), \neg \predicate(X, Z) \,\}, \{\, (Y, Z) \,\}, \\
             \{\, X \mapsto \Gamma', Y \mapsto \Delta, Z \mapsto \Delta \,\}),
           \end{multlined} \\
    c_2' &=
           \begin{multlined}[t]
             (\{\, \neg \predicate(X, Y), \neg \predicate(Z, Y) \,\}, \{\, (Z, X) \,\}, \\
             \{\, X \mapsto \Gamma', Y \mapsto \Delta, Z \mapsto \Gamma' \,\}).
           \end{multlined}
  \end{align*}
\end{example}

\begin{theorem}[Correctness of $\CR$]
  Let $\phi$ be the input formula of \cref{alg:constraintremoval}, $(\Omega, x)$
  a replaceable pair, and $\phi'$ the output formula for when $(\Omega, x)$ is
  selected on \cref{line:crconditions}. Then $\phi \equiv \phi'$, where the
  domain $\Omega'$ introduced on \cref{line:newdomain} is interpreted as
  $\Omega \setminus \{\, x \,\}$.
\end{theorem}

There is no analogue to $\CR$ in previous work on first-order knowledge
compilation. $\CR$ plays a key role by recognising when the constraints of a
formula essentially reduce the size of a domain by one and extracting this
observation into the definition of a new domain. This then allows us to relate
maps between sets of domains to the arguments of a function call.
\Cref{sec:ref,sec:interpret} describe this process.

\subsection{Identifying Opportunities for Recursion}\label{sec:ref}

\subsubsection{Hashing}
We use (integer-valued) hash functions to discard pairs of formulas too
different for recursion. The hash code of a clause $c = (L, E, \delta_{c})$
(denoted by $\# c$) combines the hash codes of the sets of constants and
predicates in $c$, the numbers of positive and negative literals, the number of
inequality constraints $|E|$, and the number of variables $|\Vars(c)|$. The hash
code of a formula $\phi$ combines the hash codes of all its clauses and is
denoted by $\#\phi$.

\subsubsection{Caching}
\textsc{ForcLift} uses a cache to check if a formula is identical to one of the
formulas that have already been fully compiled. To facilitate recursion, we
extend the caching scheme to include formulas encountered before but not fully
compiled yet. Formally, we define a \emph{cache} as a map from integers (e.g.,
hash codes) to sets of pairs of the form $(\phi, v)$, where $\phi$ is a formula,
and $v$ is an FCG node.

\begin{algorithm}[t]
  \caption{The compilation rule for $\Reff$ nodes}\label{alg:trycache}
  \KwIn{formula $\phi$, cache $C$}
  \KwOut{a set of chips}

  \ForEach{formula and node $(\psi, v) \in C(\#\phi)$}{\label{line:selectformula}
    $\rho \gets \identifyRecursion{$\phi$, $\psi$}$\;
    \lIf{$\rho \ne \nulll$}{\Return{$\{\, (\Reff_\rho(v), \langle\rangle) \,\}$}}\label{line:rho}
  }
  \Return{$\emptyset$}\;

  \Fn{\identifyRecursion{formula $\phi$, formula $\psi$, map $\rho = \emptyset$}}{
    \lIf{$|\phi| \ne |\psi|$ {\bf or} $\#\phi \ne \#\psi$}{\Return{\nulll}}
    \lIf{$\phi = \emptyset$}{\Return{$\rho$}}
    \ForEach{clause $c \in \psi$\label{line:for1}}{
      \ForEach{clause $d \in \phi$ s.t. $\#d=\#c$\label{line:for2}}{
        \ForEach{$\gamma \in \generateMaps{$c$, $d$, $\rho$}$\label{line:generateMaps}}{
          $\rho' \gets \identifyRecursion{$\phi\setminus\{\, d \,\}$, $\psi\setminus\{\, c \,\}$, $\rho\cup\gamma$}$\;\label{line:recursion}
          \lIf{$\rho' \ne \nulll$}{\Return{$\rho'$}}
        }
      }
      \Return{\nulll}\;
    }
  }
\end{algorithm}

\Cref{alg:trycache} describes the compilation rule for creating $\Reff$ nodes.
For every formula $\psi$ in the cache s.t. $\#\psi = \#\phi$, function
\identifyRecursion checks whether a recursive call is feasible. If it is,
\identifyRecursion returns a (total) map
$\rho\colon \Doms(\psi) \to \Doms(\phi)$ that shows how $\psi$ can be
transformed into $\phi$ by replacing each domain $\Omega \in \Doms(\psi)$ with
$\rho(\Omega) \in \Doms(\phi)$. Otherwise, \identifyRecursion returns \nulll{}
to signify that $\phi$ and $\psi$ are too different for recursion to work. This
happens if $\phi$ and $\psi$ (or their subformulas explored in recursive calls)
are structurally different (i.e., the numbers of clauses or the hash codes fail
to match) or if a clause of $\psi$ cannot be paired with a sufficiently similar
clause of $\phi$. Function \identifyRecursion works by iterating over pairs of
clauses of $\phi$ and $\psi$ with the same hash codes. For every pair of similar
clauses, \identifyRecursion calls itself on the remaining clauses until the map
$\rho\colon \Doms(\psi) \pfun \Doms(\phi)$ becomes total.

Function \generateMaps checks the compatibility of a pair of clauses. It
considers every possible bijection $\beta\colon \Vars(c) \to \Vars(d)$ and map
$\gamma\colon \Doms(c) \to \Doms(d)$ s.t.
\[
  \begin{tikzcd}
    \Vars(c) \ar[r, dashed, "\beta"] \arrow[d, swap, "\delta_c"] & \Vars(d) \ar[d, "\delta_d"] \\
    \Doms(c) \ar[r, dashed, "\gamma"] \ar[d, hookrightarrow] & \Doms(d) \ar[d, hookrightarrow] \\
    \Doms(\psi) \ar[r, swap, "\rho", "|" marking, outer sep=5pt] & \Doms(\phi).
  \end{tikzcd}
\]
commutes, and $c$ becomes equal to $d$ when its variables are replaced according
to $\beta$ and its domains replaced according to $\gamma$. The function then
returns each such $\gamma$ as soon as possible.

\begin{theorem}[Correctness of $\Reff$]
  Let $\phi$ be the formula used as input to \cref{alg:trycache}. Let $\psi$ be
  any formula selected on \cref{line:selectformula} of the algorithm s.t.\
  $\rho \ne \nulll$ on \cref{line:rho}. Let $\sigma$ be a domain size function.
  Then the set of models of $(\psi, \sigma \circ \rho)$ is equal to the set of
  models of $(\phi, \sigma)$.
\end{theorem}

As \textsc{ForcLift} only supports $\Reff$ nodes when the two formulas are
equal, our approach is much more general and capable of creating function calls
as complex as $f(n-k-2)$.

\section{Converting FCGs to Function Definitions}\label{sec:interpret}

\begin{algorithm}[t]
  \caption{Construct functions from an FCG}\label{alg:interpretation}
  \KwIn{$\mathcal{D}$---the set of domains of the input formula}
  \KwIn{$s$---the source node of the FCG}
  \KwData{$\functionNames = \emptyset$ (function names)}
  \KwOut{a list of function definitions}
  $\domainNames \gets \{\, \Omega \mapsto \newDomainName{} \mid \Omega \in \mathcal{D} \,\}$\;\label{line:varnames}
  \tcc{Equivalent condition: $s$ has in-degree greater than one.}
  \If{$s$ is a direct successor of a $\Reff$ node}{
    $(\expression, \functions) \gets \visit{$s$, $\domainNames$}$\;\label{line:initialcall}
    \Return{\functions}\;\label{line:initialcalltwo}
  }
  $f \gets \newFunctionName{$s$}$\;
  $(\expression, \functions) \gets \visit{$s$, $\domainNames$}$\;
  \Return{$\langle f(\langle \domainNames{$\Omega$} \mid \Omega \in \mathcal{D} \rangle) = \expression \rangle \mdoubleplus \functions$}\;
  \Fn{\visit{node $v$, domain names \domainNames}}{
    \If{$v$ is \textbf{not} a direct successor of a $\Reff$ node}{
      \Return{\actuallyVisit{$v$, \domainNames}}\;
    }
    $f \gets \newFunctionName{$v$}$\;
    $\newDomainNames \gets \domainNames$\;
    \ForEach{$\Omega \in \mathfrak{D}(v)$ s.t. $\newDomainNames(\Omega)$ is non-atomic}{\label{line:nonone}
      $\newDomainNames(\Omega) \gets \newDomainName{}$\;\label{line:nontwo}
    }
    $(\expression, \functions) \gets \actuallyVisit{$v$, \newDomainNames}$\;\label{line:actuallyvisit}
    $\functionCall \gets f(\langle \domainNames{$\Omega$} \mid \Omega \in \mathfrak{D}(v) \rangle)$\;
    $\functionSignature \gets f(\langle \newDomainNames{$\Omega$} \mid \Omega \in \mathfrak{D}(v) \rangle)$\;
    \Return{$(\functionCall, \langle \functionSignature = \expression \rangle \mdoubleplus \functions)$}\;
  }
  \Fn{\actuallyVisit{node $v$, domain names \domainNames}}{
    \Switch{label of $v$}{
      \lCase{$\GDR(v')$}{\Return{\visit{$v'$, \domainNames}}}
      \uCase{$\CR_{\Omega \mapsto \Omega'}(v')$}{
        \Return{\visit{$v'$, $\domainNames \cup \{\, \Omega' \mapsto (\domainNames(\Omega) - 1) \,\}$}}\;\label{line:cr}
      }
      \uCase{$\Reff_{\rho}(v')$}{
        $\args \gets \langle \domainNames\circ\rho(\Omega) \mid \Omega \in \mathfrak{D}(v') \rangle$\;\label{line:args}
        \Return{$(\functionNames{$v'$}(\args), \langle \rangle)$}\;
      }
      \dots
    }
  }
\end{algorithm}

\Cref{alg:interpretation} constructs a list of function definitions from an
FCG\@. The algorithm consists of two main functions: \visit and \actuallyVisit.
The former handles new function definitions, while the latter produces an
algebraic interpretation of each node depending on its type. As there are many
node types, we only include the ones pertinent to the contributions of this
paper; see previous work~\cite{DBLP:conf/ijcai/BroeckTMDR11} for information
about other types. Given a node $v$ as input, both \visit and \actuallyVisit
return a pair $(\expression, \functions)$. Here, \expression is the algebraic
expression representing $v$, and \functions is a list of auxiliary functions
created while formulating \expression.

The algorithm gradually constructs two partial maps \functionNames and
\domainNames providing names to functions and domains, respectively.
\functionNames is a global variable that maps FCG nodes to function names (e.g.,
$f$, $g$). \domainNames maps domains to their names, and it is passed as an
argument to \visit and \actuallyVisit. Here, a \emph{domain name} is either a
parameter of a function or an algebraic expression consisting of function
parameters, subtraction, and parentheses. We call a domain name \emph{atomic} if
it is free of subtraction (e.g., $m$, $n$); otherwise it is \emph{non-atomic}
(e.g., $m-1$, $n-l$). Functions \newDomainName and \newFunctionName both
generate previously-unused names. The latter also takes an FCG node as input and
links it with the new name in \functionNames.

We assume a fixed total ordering of all domains. In particular, in
\cref{example:interpretation} below, let $\Gamma$ go before $\Delta$. For each
node $v$, let $\mathfrak{D}(v)$ denote the (pre-computed) list of domains, the
sizes of which we would use as parameters if we were to define a function that
begins at $v$. As a set, it is computed by iteratively setting
$\mathfrak{D}(v) \gets \left(\bigcup_{u} \mathfrak{D}(u) \setminus I_{v}\right) \cup U_{v}$
until convergence, where:
\begin{enumerate*}[label=(\roman*)]
  \item the union is over the direct successors of $v$,
  \item $I_{v}$ is the set of domains \emph{introduced} at node $v$, and
  \item $U_{v}$ is the set of domains \emph{used} by $v$.
\end{enumerate*}
The set is then sorted according to the ordering.

\begin{figure}
  \centering
  \begin{tikzcd}
    \ar[d] & \\
    \text{\Cref{alg:interpretation}} \ar[d] \ar[u, swap, shift right=0.7ex, "{\left\langle f(m, n) = \sum_{l=0}^{n}[l<2]f(m-1,n-l) \right\rangle}"] & \\
    \visit{$\GDR$, $\Done$} \ar[d] \ar[u, swap, shift right=0.7ex, "\left({f(m, n), \left\langle f(m, n) = \sum_{l=0}^{n}[l<2]f(m-1,n-l) \right\rangle}\right)"] & \\
    \actuallyVisit{$\GDR$, $\Done$} \ar[d] \ar[u, swap, shift right=0.7ex, "\left({\sum_{l=0}^{n}[l<2]f(m-1,n-l), \langle\rangle}\right)"] & \\
    \visit/\actuallyVisit{$\bigvee$, $\Done$} \ar[d] \ar[u, swap, shift right=0.7ex, "\left({\sum_{l=0}^{n}[l<2]f(m-1,n-l), \langle\rangle}\right)"] & \\
    \visit/\actuallyVisit{$\CR$, $\Dtwo$} \ar[d] \ar[u, swap, shift right=0.7ex, "\left({[l<2]f(m-1,n-l), \langle\rangle}\right)"] & \\
    \visit/\actuallyVisit{$\land$, $\Dthree$} \ar[d] \ar[dr] \ar[u, swap, shift right=0.7ex, "\left({[l<2]f(m-1,n-l), \langle\rangle}\right)"] & \\
    \visit/\actuallyVisit{$\bot$, $\Dthree$} \ar[u, swap, shift right=0.7ex, "\left({[l<2], \langle\rangle}\right)"] & \visit/\actuallyVisit{$\Reff$, $\Dthree$} \ar[ul, swap, shift right=0.7ex, "\left({f(m-1, n-l), \langle\rangle}\right)"]
  \end{tikzcd}
  \caption{Function calls and their return values when \cref{alg:interpretation}
    is run on the FCG from \cref{fig:examplefcg}}\label{fig:functioncalls}
\end{figure}
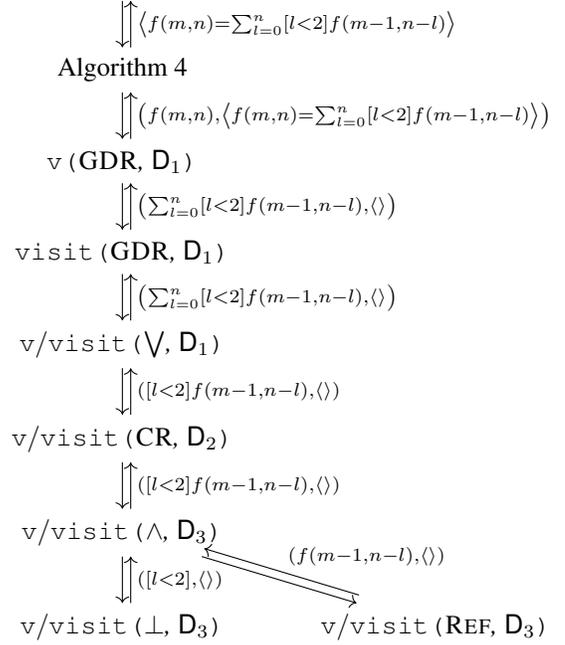

\begin{example}\label{example:interpretation}
  Here we examine how \cref{alg:interpretation} works on the FCG from
  \cref{fig:examplefcg}. In this case, $\mathcal{D} = \{\, \Gamma, \Delta \,\}$.
  Let $\Done \coloneqq \{\, \Gamma \mapsto m, \Delta \mapsto n \,\}$,
  $\Dtwo \coloneqq \Done \cup \{\, \Delta^{\top} \mapsto l, \Delta^{\bot} \mapsto (n-l) \,\}$,
  and $\Dthree \coloneqq \Dtwo \cup \{\, \Gamma' \mapsto (m-1) \,\}$ denote
  versions of \domainNames at various points throughout the algorithm's
  execution. Here, $l$, $m$, and $n$ are all arbitrary names generated by
  \newDomainName.

  \Cref{fig:functioncalls} shows the outline of function calls and their return
  values. For simplicity, we refer to FCG nodes by their types. When \visit
  calls \actuallyVisit and returns its output unchanged, we shorten the two
  function calls to $\visit/\actuallyVisit$.

  \Cref{line:varnames} initialises \domainNames to $\Done$. Once \visit{$\GDR$,
    $\Done$} is called on \cref{line:initialcall}, \newFunctionName updates
  \functionNames to $\{\, \GDR \mapsto f \,\}$, where $f$ is a new function
  name. There are no non-atomic names in $\Done$, so
  \cref{line:nonone,line:nontwo} are skipped, and calls to
  \actuallyVisit{$\GDR$, $\Done$} and
  $\visit/\actuallyVisit{$\bigvee$, $\Done$}$ follow. Here, $\Done$ becomes
  $\Dtwo$, i.e., the $\bigvee$ node introduces two domains $\Delta^{\top}$ and
  $\Delta^{\bot}$ that `partition' $\Delta$, i.e.,
  $\Dtwo(\Delta^{\top}) + \Dtwo(\Delta^{\bot}) = l + (n-l) = n = \Dtwo(\Delta)$.
  Similarly, \actuallyVisit{$\CR$, $\Dtwo$} on \cref{line:cr} adds $\Gamma'$,
  replacing $\Dtwo$ with $\Dthree$.

  Eventually, \actuallyVisit{$\bot$, $\Dthree$} is called. A contradiction node
  with clause $c$ as a parameter is interpreted as one if the clause has
  groundings and zero otherwise. In this case, the parameter is
  $c = (\emptyset, \{\, (X, Y) \,\}, \{\, X \mapsto \Delta^\top, Y \mapsto \Delta^\top \,\})$
  (not shown in \cref{fig:examplefcg}). Equivalently,
  $\forall X, Y \in \Delta^{\top}\text{. }X \ne Y \Rightarrow \bot$, i.e.,
  $\forall X, Y \in \Delta^{\top}\text{. }X = Y$, i.e.,
  $\Dthree(\Delta^{\top}) < 2$. We use the Iverson bracket notation to write
  \[
    [l < 2] \coloneqq
    \begin{cases}
      1 & \text{if } l < 2 \\
      0 & \text{otherwise}
    \end{cases}
  \]
  for the output expression of \actuallyVisit{$\bot$, $\Dthree$}.

  Next, \actuallyVisit{$\land$, $\Dthree$} calls
  $\visit/\actuallyVisit{$\Reff$, $\Dthree$}$. Since
  $\mathfrak{D}(\GDR) = \langle \Gamma, \Delta \rangle$, and the pair
  $\langle\Gamma, \Delta\rangle$ is transformed by $\rho$ to
  $\langle\Gamma', \Delta^{\bot}\rangle$ and then by $\Dthree{}$ to
  $\langle m-1, n-l \rangle$, \args is set to $\langle m-1, n-l \rangle$ on
  \cref{line:args}, and the output expression becomes $f(m-1, n-l)$. The call to
  \actuallyVisit{$\land$, $\Dthree$} then returns the product of the two
  expressions $[l<2]f(m-1, n-l)$. Next, \actuallyVisit{$\bigvee$, $\Done$}
  returns $\sum_{l=0}^{n}[l<2]f(m-1,n-l)$. The same expression (and an empty
  list of functions) is then returned to \cref{line:actuallyvisit} of the call
  to \visit{$\GDR$, $\Done$}. Here, both \functionCall and \functionSignature
  are set to $f(m, n)$. As the definition of $f$ is the final answer and not
  part of some other algebraic expression,
  \cref{line:initialcall,line:initialcalltwo} discard the function call
  expression and return the definition of $f$. Thus,
  \begin{align}
    f(m, n) &= \sum_{l = 0}^{n} \binom{n}{l} [l < 2] f(m-1, n-l)\nonumber \\
            &= f(m-1, n) + n f(m-1, n-1)\label{eq:solution}
  \end{align}
  is the function that \textsc{Crane} constructed for computing partial
  injections. To use $f$ in practice, one has to identify the base cases
  $f(0, n)$ and $f(m, 0)$ for all $m, n \in \mathbb{N}_{0}$.
\end{example}

\section{Complexity Results}\label{sec:results}

\begin{table*}[t]
  \centering
  \begin{tabular}{cccccc}
    \toprule
    \multicolumn{3}{c}{Function class} & \multicolumn{3}{c}{Complexity of $\mathcal{F}_{\phi}$ (as in \cref{fig:process})} \\
    Partial & Endo- & Class & By Hand & With \textsc{ForcLift} & With \textsc{Crane} \\
    \midrule
    \rowcolor{gray!25}\cmark/\xmark & \cmark/\xmark & Functions & $\log m$ & $m$ & $m$ \\
    \xmark & \xmark & \multirow{2}{*}{Surjections} & $n \log m$ & $m^{3}+n^{3}$ & $m^{3}+n^{3}$ \\
    \xmark & \cmark & & $m \log m$ & $m^{3}$ & $m^{3}$ \\
    \rowcolor{gray!25}\xmark & \xmark & & $m$ & --- & $mn$ \\
    \rowcolor{gray!25}\xmark & \cmark & & $m$ & --- & $m^3$ \\
    \rowcolor{gray!25}\cmark & \xmark & & ${\min\{\, m, n \,\}}^2$ & --- & $mn$ \\
    \rowcolor{gray!25}\cmark & \cmark & \multirow{-4}{*}{Injections} & $m^2$ & --- & --- \\
    \xmark & \xmark & Bijections & $m$ & --- & $m$ \\
    \bottomrule
  \end{tabular}
  \caption{The worst-case complexity of counting various types of functions. All
    asymptotic complexities are in $\Theta(\cdot)$. A dash means that the
    algorithm was not able to find a solution. In the case of \textsc{ForcLift},
    this means that the greedy search algorithm ended with a formula unsuitable
    for any compilation rule. In the case of \textsc{Crane}, this means that a
    complete solution could not be found after having explored the maximum
    allowed depth of the search tree.}\label{tbl:results}
\end{table*}

We compare \textsc{Crane} and
\textsc{ForcLift}\footnote{\url{https://dtaid.cs.kuleuven.be/wfomc}} on their
ability to count various classes of functions. We chose this class of instances
because of its simplicity and the inability of publicly available WFOMC
algorithms to solve many such counting problems. Note that, except for a
particular version of \textsc{FastWFOMC}, other exact WFOMC algorithms cannot
solve any instances \textsc{ForcLift} fails on. First, we describe how to
express these function-counting problems in first-order logic. \textsc{ForcLift}
then translates these sentences in first-order logic to formulas as in
\cref{def:formula}.

Let $\predicate$ be a predicate that models relations between sets $\Gamma$ and
$\Delta$. To restrict all such relations to just $\Gamma \to \Delta$ functions,
one might write $\forall X \in \Gamma\text{. }\forall Y,Z \in \Delta\text{.
}\predicate(X, Y) \land \predicate(X, Z) \Rightarrow Y = Z$ and
\begin{equation}\label{eq:def2}
  \forall X \in \Gamma\text{. }\exists Y \in \Delta\text{. }\predicate(X, Y).
\end{equation}
The former sentence says that one element of $\Gamma$ can map to at \emph{most}
one element of $\Delta$, and the latter sentence says that each element of
$\Gamma$ must map to at \emph{least} one element of $\Delta$. One can then add
$\forall X,Z \in \Gamma\text{. }\forall Y \in \Delta\text{.
}\predicate(X, Y) \land \predicate(Z, Y) \Rightarrow X = Z$ to restrict
$\predicate$ to injections or $\forall Y \in \Delta\text{.
}\exists X \in \Gamma\text{. }\predicate(X, Y)$ to ensure surjectivity or remove
\cref{eq:def2} to consider partial functions. Lastly, one can replace all
occurrences of $\Delta$ with $\Gamma$ to model endofunctions (i.e., functions
with the same domain and codomain) instead.

We consider all sixteen combinations of these properties (injectivity,
surjectivity, partiality, and endo-), omitting duplicate descriptions of the
same function, e.g., the number of partial $\Gamma \to \Delta$ bijections is the
same as the number of $\Delta \to \Gamma$ injections. We run each algorithm once
on each instance. \textsc{Crane} is run in hybrid search mode until either it
finds five solutions or the search tree reaches height six. \textsc{ForcLift} is
always run until it terminates. If successful, \textsc{Crane} returns one or
more sets of function definitions, and \textsc{ForcLift} returns a circuit,
which can similarly be interpreted as a function definition. We then assess the
complexity of each solution by hand and pick the best in case \textsc{Crane}
returns several solutions of varying complexities. In our experiments,
\textsc{Crane} found at most four solutions per problem instance, and most
solution had the same complexity.



Recall that, in the case of \textsc{ForcLift}, solving a (W)FOMC problem
consists of two parts: compilation (via search) and propagation. (A similar
dichotomy applies to \textsc{Crane} as well.) The complexity of the former
depends only on the formula and so far has received very little attention except
for the work by Beame et al.~\shortcite{DBLP:conf/pods/BeameBGS15}. The
complexity of the latter, on the other hand, is a function of domain sizes and
can be determined without measuring runtime. We establish the asymptotic
complexity of a solution by counting the number of arithmetic operations needed
to follow the definitions of constituent functions without recomputing the same
quantity multiple times.\footnote{Although this is not done by default, one
  could also use arbitrary-precision arithmetic and consider bit complexity
  instead.} In particular, we assume that each function call and binomial
coefficient is computed at most once, and computing $\binom{n}{k}$ takes
$\Theta(nk)$ time. For example, the complexity of \cref{eq:solution} is
$\Theta(mn)$ since $f(m, n)$ can be computed by a dynamic programming algorithm
that computes $f(i, j)$ for all $i = 0, \dots, m$ and $j = 0, \dots, n$, taking
a constant amount of time on each $f(i, j)$.

Let $m = |\Gamma|$ and $n = |\Delta|$ be domain sizes. We summarise the results
in \cref{tbl:results}, where we compare the solutions found by both algorithms
to the manually-constructed ways of computing the same quantities that have been
proposed before.\footnote{\url{https://oeis.org},
  \url{https://scicomp.stackexchange.com/q/30049}} Examining the last two
columns of the table, we see that the two algorithms perform equally well on
instances that could already be solved by \textsc{ForcLift}. However,
\textsc{Crane} can also solve all but one of the instances that
\textsc{ForcLift} fails on in at most cubic time. See \cref{sec:solutions} for
the exact solutions produced by \textsc{Crane}.

\subsection{A Comparison with \textsc{\normalfont FastWFOMC}}

The function-counting problems described above can be expressed in \FOtwo{} with
cardinality constraints, and so are known to be
liftable~\cite{DBLP:journals/jair/Kuzelka21}. However, two algorithms that both
run in polynomial time are not necessarily equally good: the degree of the
polynomial can make a substantial difference. Hence, we compare \textsc{Crane}
with \textsc{FastWFOMC}~\cite{DBLP:conf/uai/BremenK21}, extended to support
cardinality constraints, which was provided to us by one of the authors. We
compare them on the task of counting permutations of a set, i.e.,
bijective/injective endofunctions. We can describe this problem in \FOtwo{} with
cardinality constraints as
\begin{align*}
  (|\predicate| = |\Delta|) \land{} &(\forall X, Y \in \Delta\text{. } \predicater(X) \lor \neg \predicate(X, Y)) \land{}\\
  &(\forall X, Y \in \Delta\text{. } \predicates(X) \lor \neg\predicate(Y, X)),
\end{align*}
where $\predicater$ and $\predicates$ are Skolem predicates as described by Van
den Broeck et al.~\shortcite{DBLP:conf/kr/BroeckMD14}.

\begin{figure}[t]
  \includegraphics{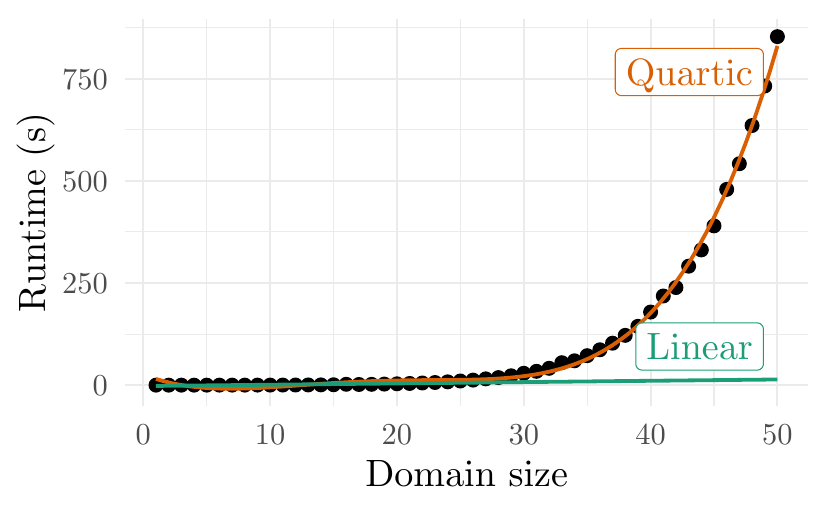}
  \caption{The runtime of \textsc{FastWFOMC} when counting permutations
    (indicated by points), together with a quartic regression model fitted on
    all of the data and a linear model fitted on the first half of the data
    points}\label{fig:fastwfomc}
\end{figure}

We run \textsc{FastWFOMC} on an AMD Ryzen~7~5800H processor with
\SI{16}{\giga\byte} of memory, running Arch Linux~6.2.2-arch1-1 and
Python~3.10.9, with $|\Delta|$ ranging from 1 to 50. The results are in
\cref{fig:fastwfomc}. We use ten-fold cross-validation to check the degree of
the polynomial that best fits the data. The best-fitting degree is 6 with an
average mean squared error (aMSE) of 16.9, although all degrees above 4 fit the
data almost equally well. As \cref{fig:fastwfomc} demonstrates, a quartic (i.e.,
degree 4) polynomial (with an aMSE of 113.5) fits well too. However, the aMSE
quickly grows to 949.5 and higher values for degrees less than 4. According to
\cref{tbl:results}, \textsc{Crane} finds a cubic solution to this problem.
However, we can also use the linear solution for counting bijections between
different domains for this task by setting $n \coloneqq m$.

\subsection{Miscellaneous Other Instances}

We also note that standard (W)FOMC benchmarks such as
\cref{example:smokers}---that are already supported by
\textsc{ForcLift}---remain feasible for \textsc{Crane} as well. To demonstrate
that \textsc{Crane} is capable of finding efficient solutions to problems beyond
known domain-liftable fragments such as \Ctwo{}, we run \textsc{Crane} in greedy
search mode on the following formula:
\begin{equation}\label[formula]{eq:ternary}
  \begin{split}
    &\begin{multlined}[t]
      (\forall X,W \in \Gamma\text{. }\forall Y \in \Delta\text{. }\forall Z \in \Lambda\text{. }\\
      \predicate(X, Y, Z) \land \predicate(W, Y, Z) \Rightarrow X = W) \land{}
    \end{multlined}\\
    &\begin{multlined}[t]
      (\forall X \in \Gamma\text{. }\forall Y,W \in \Delta\text{. }\forall Z \in \Lambda\text{. }\\
      \predicate(X, Y, Z) \land \predicate(X, W, Z) \Rightarrow Y = W).
    \end{multlined}
  \end{split}
\end{equation}
For every element $z$ of $\Lambda$, \cref{eq:ternary} restricts
$\predicate(X, Y, z)$ to be a $\Gamma \to \Delta$ partial injection, independent
of $\predicate(X, Y, z')$ for any $z' \ne z$. \textsc{Crane} instantly returns a
solution of complexity $\Theta(l+mn)$, where $l = |\Lambda|$.

\section{Discussion}\label{sec:conclusion}
\setlength{\multlinegap}{0pt}

This paper presents \textsc{Crane}---an extension of
\textsc{ForcLift}~\cite{DBLP:conf/ijcai/BroeckTMDR11} that benefits from a more
general version of domain recursion and support for graphs with cycles. In
\cref{sec:results}, we listed a range of counting problems that became newly
feasible as a result, including instances outside of currently-known
domain-liftable fragments such as \Ctwo{}. (Although we focused on unweighted
counting, \textsc{ForcLift}'s support for weights trivially transfers to
\textsc{Crane} too.) The common thread across these newly liftable problems is a
version of partial injectivity. Thus, we formulate the following conjecture for
a class of formulas with an injectivity-like constraint on at most two
parameters of a predicate $\predicate$.

\begin{conjecture}\label{conj}
  Let \IFO{} be the class of formulas in first-order logic that contain clauses
  with at most two variables as well as any number of copies of
  \begin{align*}
    &\begin{multlined}[t][0.9\linewidth]
      (\forall X_{1} \in \Delta_{1}\text{. }\dots\,\forall X_{n} \in \Delta_{n}\text{. }\forall Y \in \Delta_{i}\text{. }\predicate(X_{1}, \dots, X_{n}) \land{}\\
      \predicate(X_{1}, \dots, X_{i-1}, Y, X_{i+1}, \dots, X_{n}) \Rightarrow X_{i} = Y) \land{}
    \end{multlined}\\
    &\begin{multlined}[t][0.9\linewidth]
      (\forall X_{1} \in \Delta_{1}\text{. }\dots\,\forall X_{n} \in \Delta_{n}\text{. }\forall Y \in \Delta_{j}\text{. }\predicate(X_{1}, \dots, X_{n}) \land{}\\
      \predicate(X_{1}, \dots, X_{j-1}, Y, X_{j+1}, \dots, X_{n}) \Rightarrow X_{j} = Y)
    \end{multlined}
  \end{align*}
  for some predicate $\predicate/n$, domains
  $\Delta_{1}, \dots, \Delta_{n}$, and $i, j \in \{\, 1, \dots, n \,\}$. Then
  \IFO{} is liftable by \textsc{Crane}.
\end{conjecture}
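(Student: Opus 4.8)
The plan is to generalise the recursive decomposition that \textsc{Crane} already discovers for partial injections (the $n = 2$, $i = 1$, $j = 2$ instance worked out in \cref{example:interpretation}) so that it applies uniformly across \IFO{}. Write the input as $\phi = \phi_{2} \cup \phi_{\predicate}$, where $\phi_{2}$ collects the clauses with at most two variables and $\phi_{\predicate}$ is the conjunction of the injectivity constraints on $\predicate/n$. Since $\phi_{2}$ lies in \FOtwo{}, it is already liftable by the existing \textsc{ForcLift} rules, so the crux is to show that the two functional dependencies encoded by $\phi_{\predicate}$ --- coordinate $i$ determined by the remaining coordinates, and likewise coordinate $j$ --- can be unwound recursively without a super-polynomial blow-up.

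First I would show that $\GDR$ is always applicable to one of the two injective coordinate domains, say $\Delta_{i}$: by the single precondition of \cref{alg:domainrecursion} it suffices that some variable with domain $\Delta_{i}$ occurs in a literal, which holds because $X_{i}$ occurs in $\predicate(X_{1}, \dots, X_{n})$. Applying $\GDR$ introduces a constant $x \in \Delta_{i}$ and, after the residual clauses are processed by the standard rules, splits the count according to how the tuples whose $i$-th coordinate equals $x$ consume elements of the coordinate constrained by the second dependency; injectivity in coordinate $j$ is precisely what bounds this number. The key structural step is to verify that, once the literals mentioning $x$ have been resolved, the pair $(\Delta_{i}, x)$ becomes \emph{replaceable} in the sense of \cref{def:replaceable}, so that $\CR$ fires and replaces $\Delta_{i}$ by a fresh domain $\Delta_{i}' = \Delta_{i} \setminus \{\, x \,\}$ of size one smaller.

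Next I would argue that the formula reached after $\GDR$ and $\CR$ is isomorphic to $\phi$ up to a renaming of domains, so that the $\Reff$ rule of \cref{alg:trycache} closes a cycle and \cref{alg:interpretation} emits a recurrence in the domain-size parameters. By \cref{thm:correctness1} together with the correctness of $\CR$ and $\Reff$, this recurrence computes the FOMC correctly, so only its cost remains. Since the formula is fixed, the number of relevant domains is a constant; the argument tuples encountered during evaluation range over the product of the (finitely many) domain sizes, possibly augmented by the bounded partition variables introduced by the set-disjunction, and are therefore polynomial in number. Showing, exactly as for \cref{eq:solution}, that each node contributes only polynomial work then yields a polynomial-time dynamic-programming evaluation.

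The main obstacle, and the reason the statement is a conjecture, is twofold. First, one must prove that \textsc{Crane} actually \emph{finds} such a solution: greedy search is not complete, and hybrid breadth-first search only explores to a bounded depth, so a genuine proof must show that the $\GDR$/$\CR$/$\Reff$ cycle is reachable for every member of \IFO{}, ideally at a depth independent of the domain sizes. Second, and more delicate, is establishing that the cycle closes in full generality: with arity $n > 2$, with several copies of the constraints (possibly on different predicates), and with the \FOtwo{} clauses of $\phi_{2}$ interacting with $\predicate$ through shared domains, one must guarantee both that $(\Delta_{i}, x)$ really does become replaceable once the grounded literals are resolved and that the residual formula is recognised as a renaming of $\phi$ by \generateMaps. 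Because $\Reff$ can only fire when these structural conditions hold, proving that they always eventually hold --- rather than the search stalling on a formula unsuitable for any rule, as \textsc{ForcLift} does --- is where I expect the real difficulty to lie.
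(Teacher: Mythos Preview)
The paper does not prove this statement: it is explicitly labelled a \emph{conjecture}, introduced in \cref{sec:conclusion} as a direction for future work, and the surrounding discussion only argues that the experimental failures in \cref{tbl:results} do not immediately refute it. There is therefore no paper proof to compare your proposal against.

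Your outline is a sensible plan of attack and, to your credit, you correctly identify that it is not a proof. The two obstacles you name---completeness of the search (so that the $\GDR$/$\CR$/$\Reff$ cycle is actually found) and the structural guarantee that the residual formula is always a domain-renaming of the original (so that $\Reff$ can close the cycle)---are exactly the gaps that keep this a conjecture rather than a theorem. In particular, your assumption that ``once the literals mentioning $x$ have been resolved, the pair $(\Delta_{i}, x)$ becomes replaceable'' is doing a lot of unsubstantiated work: when $\phi_{2}$ shares predicates or domains with the injectivity clauses, the intermediate rules may produce clauses in which $x$ still occurs in a literal, or in which some variable over $\Delta_{i}$ lacks the $(X, x)$ constraint, violating \cref{def:replaceable}. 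Similarly, the claim that the post-$\CR$ formula is isomorphic to $\phi$ glosses over the fact that $\GDR$ on $\Delta_{i}$ can interact with the $\Delta_{j}$-injectivity clause (and with $\phi_{2}$) in ways that alter clause structure, not just domain names---and \generateMaps requires exact clause-level matching up to variable bijection and domain renaming. Until those two points are established for arbitrary members of \IFO{}, the argument remains a heuristic, which is consistent with the paper's own stance.
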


Note that the unsolved instance in \cref{tbl:results} does not invalidate
\cref{conj} for two reasons. First, in our experiments, the depth of the search
procedure is limited, so there is a possibility that increasing the depth would
make the algorithm succeed. Second, the conjecture assumes distinct domains,
while the unsolved instance pertains to endofunctions.





The most important direction for future work is to automate the process in
\cref{fig:processcrane}. First, we need a way to find the base cases for the
recursive definitions produced by \textsc{Crane}. Second, since the first
solution found by \textsc{Crane} is not always optimal in terms of its
complexity, an automated way to determine the asymptotic complexity of a
solution would be helpful as well. Third, as we saw at the end of
\cref{example:interpretation}, the functions constructed by \textsc{Crane} can
often benefit from elementary algebraic simplifications. This process can be
automated using a computer algebra system. Furthermore, note that having the
solution to a (W)FOMC problem expressed in terms of functions enables many new
possibilities such as
\begin{enumerate*}[label=(\roman*)]
  \item the use of more sophisticated simplification techniques,
  \item asymptotic analysis of, e.g., how the model count grows as the domain
  size goes to infinity, and
  \item answering questions parameterised by domain sizes, e.g., `how big does
  domain $\Delta$ have to be for the probability of event $E$ to be above
  \SI{95}{\percent}?'
\end{enumerate*}

\section*{Acknowledgments}

We thank Anna L.D. Latour and Kuldeep S. Meel for comments on earlier drafts of
this paper. Most of the work for this paper was done while the first author was
a PhD student at the University of Edinburgh. The first author was supported by
the EPSRC Centre for Doctoral Training in Robotics and Autonomous Systems,
funded by the UK Engineering and Physical Sciences Research Council (grant
EP/L016834/1). The second author was partly supported by a Royal Society
University Research Fellowship, UK, and partly supported by a Grant from the
UKRI Strategic Priorities Fund, UK to the UKRI Research Node on Trustworthy
Autonomous Systems Governance and Regulation (EP/ V026607/1, 2020–2024). For the
purpose of open access, the authors have applied a Creative Commons Attribution
(CC BY) licence to any Author Accepted Manuscript version arising from this
submission.

\appendix
\section{Proofs}\label{sec:proofs}

We begin with a few definitions that formalise the notion of a model. For any
set $S$, let $2^{S}$ denote its power set and
$S^{\ast} \coloneqq \bigcup_{i=0}^{\infty} S^{i}$ the (infinite) set of
tuples---of any finite length---of elements of $S$. For example,
${\{\,a, b\,\}}^{\ast} = \{\, (), (a), (b), (a, a), \dots \,\}$. For any
$n \in \mathbb{N}_{0}$, let $[n] \coloneqq \{\, 1, \dots, n \,\}$, e.g.,
$[0] = \emptyset$, and $[2] = \{\, 1, 2 \,\}$. Extending the notation for
substitution introduced in \cref{sec:recprelims}, for a formula $\phi$, constant
$c$, and set of variables $V$, let $\phi[c/V]$ denote $\phi$ with all
occurrences of all variables in $V$ replaced with $c$. Let $\Preds$ be the
function that maps any clause or formula to the set of predicates used in it.
With each formula $\phi$, we associate a map
$\pi_{\phi}\colon \Preds(\phi) \to {\Doms(\phi)}^{\ast}$ s.t., for each
predicate $\predicate/n \in \Preds(\phi)$, we have that
$\pi_{\phi}(\predicate) \in {\Doms(\phi)}^{n}$. This map makes explicit the idea
that each $n$-ary predicate is associated with an $n$-tuple of domains. In the
case of formula $\phi$ from \cref{example:first},
$\pi_{\phi} = \{\, \predicate \mapsto (\Gamma, \Delta) \,\}$.

\begin{definition}\label{def:model}
  Let $\phi$ be a formula and $\sigma$ a domain size function. A \emph{model} of
  $(\phi, \sigma)$ is a map
  $\mathfrak{M}\colon \Preds(\phi) \to 2^{\mathbb{N}_{0}^{\ast}}$ s.t.\ the
  following two conditions are satisfied.
  \begin{enumerate}
    \item For all predicates $\predicate/n \in \Preds(\phi)$ with
          $\pi_{\phi}(\predicate) \coloneqq {(\Delta_{i})}_{i=1}^{n}$ for some
          domains $\Delta_{i} \in \Doms(\phi)$,
    \begin{equation}\label{eq:subset}
      \mathfrak{M}(\predicate) \subseteq \prod_{i=1}^{n}[\sigma(\Delta_{i})].\footnote{For simplicity, \cref{def:model} ignores constants. To include constants, one would replace $[\sigma(\Delta_{i})]$ with a set that contains all constants associated with domain $\Delta_{i}$, extended with enough new elements to make its cardinality $\sigma(\Delta_{i})$.}
    \end{equation}
    \item $\mathfrak{M} \models \phi$ according to the (non-exhaustive)
          recursive definition below. Let $\chi$ and $\psi$ be formulas,
          $\Delta$ a domain, $c \in \Delta$ a constant, $\predicate/n$ a
          predicate associated with domains $\Delta_{1},\dots,\Delta_{n}$, and
          $t \in \prod_{i=1}^{n} \Delta_{i}$ a tuple of constants.
    \begin{itemize}
      \item $\mathfrak{M} \models \predicate(t)$ if and only if (iff)
            $t \in \mathfrak{M}(\predicate)$.
      \item $\mathfrak{M} \models \neg\chi$ iff $\mathfrak{M} \not\models \chi$.
      \item $\mathfrak{M} \models \chi \land \psi$ iff
            $\mathfrak{M} \models \chi$ and $\mathfrak{M} \models \psi$.
      \item $\mathfrak{M} \models \chi \lor \psi$ iff
            $\mathfrak{M} \models \chi$ or $\mathfrak{M} \models \psi$.
      \item $\mathfrak{M} \models (\forall X \in \Delta\text{. }\chi)$ iff
            $\mathfrak{M} \models \chi[x/\{\, X \,\}]$ for all constants
            $x \in \Delta$.
      \item $\mathfrak{M} \models (\forall X \in \Delta\text{.
            } X \ne c \Rightarrow \chi)$ iff
            $\mathfrak{M} \models \chi[x/\{\, X \,\}]$ for all
            $x \in \Delta \setminus \{\, c \,\}$.
      \item $\mathfrak{M} \models (\forall X,Y \in \Delta\text{.
            } X \ne Y \Rightarrow \chi)$ iff
            $\mathfrak{M} \models \chi[x/\{\, X \,\}][y/\{\, Y \,\}]$ for all
            $x,y \in \Delta$ s.t. $x \ne y$.
    \end{itemize}
  \end{enumerate}
\end{definition}

\begin{theorem}[Correctness of $\GDR$]
  Let $\phi$ be the formula used as input to \cref{alg:domainrecursion},
  $\Omega \in \mathcal{D}$ the domain selected on \cref{line:condition}, and
  $\phi'$ the formula constructed by the algorithm for $\Omega$. Suppose that
  $\Omega \ne \emptyset$. Then $\phi \equiv \phi'$.
\end{theorem}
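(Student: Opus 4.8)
The plan is to fix an arbitrary domain-size function $\sigma$ and an arbitrary model $\mathfrak{M}$ and to show, using the semantics of \cref{def:model}, that $\mathfrak{M} \models \phi$ iff $\mathfrak{M} \models \phi'$. Since a formula is a conjunction of clauses and since \cref{alg:domainrecursion} builds $\phi'$ independently from each clause of $\phi$ (the loop on \cref{line:forclause}), it suffices to prove the local statement: for each clause $c = (L, E, \delta) \in \phi$, the single clause $c$ is equivalent to the conjunction of all clauses that \cref{alg:domainrecursion} generates from it. All the global work then reduces to this clause-wise claim.

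First I would set up the case split that is the heart of the argument. Let $V = \{X \in \Vars(L) \mid \delta(X) = \Omega\}$ be as on \cref{line:V}, and recall that $\Omega \ne \emptyset$ guarantees the existence of the element named by the constant $x$ (\cref{line:constant}). Reading $c$ via \cref{def:model} as a universally quantified, constraint-guarded disjunction, every assignment of the variables in $V$ to elements of $\Omega$ determines the subset $W = \{X \in V \mid X \text{ is assigned } x\}$. Grouping the assignments by this $W$ partitions the universal quantification over $V$ into one block per $W \subseteq V$. I would then check that the block for a fixed $W$ coincides exactly with the clause $c_W$ produced on \cref{line:generation}: substituting $x$ for the variables of $W$ realises ``these variables equal $x$'' (this is $L[x/W]$ and $E[x/W]$), while the freshly added constraints $\{(X, x) \mid X \in V \setminus W\}$ realise ``the remaining $\Omega$-variables differ from $x$'', and the original inequalities of $E$ are carried along unchanged. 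The domain map $\delta'$ restricts $\delta$ to the surviving variables, so $c_W$ type-checks. This yields $c \equiv \bigwedge_{W \subseteq V} c_W$ over \emph{all} subsets.

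Next I would account for the two side conditions on \cref{line:conditions}, which is where the subsets actually enumerated differ from all of $2^{V}$. For a subset with $W^2 \cap E \ne \emptyset$, two $W$-variables are constrained to be distinct, so after substitution $c_W$ acquires a precondition $x \ne x$; by the constraint semantics of \cref{def:model} this precondition is never met, so $c_W$ is a tautology and may be dropped from the conjunction without affecting its truth value (dually, no assignment grouped under such a $W$ satisfies $E$, so this block imposes nothing on $c$). The same reasoning should apply to a $W$ meeting the constant-constraint set on \cref{line:conditions}: substitution would turn a constraint $(X, y)$ into a degenerate constant-to-constant inequality, and the corresponding block should again be vacuous, so omitting these $c_W$ likewise preserves the conjunction. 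Hence $\bigwedge_{W \text{ generated}} c_W = \bigwedge_{W \subseteq V} c_W \equiv c$, and taking the conjunction over all clauses of $\phi$ gives $\phi \equiv \phi'$.

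The main obstacle I anticipate is precisely the bookkeeping of this case split against the semantics of \cref{def:model}: making the correspondence between a block of assignments (grouped by $W$) and the generated clause $c_W$ watertight, in particular verifying that $E[x/W] \cup \{(X, x) \mid X \in V \setminus W\}$ captures exactly the event ``$W$-variables equal $x$ and $(V \setminus W)$-variables differ from $x$'' together with the surviving original inequalities, and that every subset excluded by \cref{line:conditions} does yield a tautological (equivalently, vacuously guarded) clause. Everything else---the reduction to a single clause and the final conjunction over clauses---is routine.
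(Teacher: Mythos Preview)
Your approach is essentially the paper's: reduce to a single clause $c$, partition the groundings by which $\Omega$-variables take the value $x$, and match each block with the clause $c_W$ produced on \cref{line:generation}. The paper phrases the same idea as iterating the one-variable split $\forall X \in \Omega.\,\psi \mapsto \{\psi[x/\{X\}],\ \forall X \in \Omega.\, X \ne x \Rightarrow \psi\}$ over all $X \in V$ and then observing that the result coincides with \cref{line:conditions,line:generation} up to discarded tautologies; the content is the same, your direct partition just avoids the induction.

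One step needs more care. For the first side condition ($W^2 \cap E = \emptyset$) your argument is right: substitution produces the guard $x \ne x$, which is never satisfied, so $c_W$ is a tautology. But ``the same reasoning'' does not transfer to the second side condition. If $X \in W$ and $(X, y) \in E$ for a pre-existing constant $y$, substitution yields the guard $x \ne y$, and since $x$ is declared a \emph{new} constant on \cref{line:constant} this guard is always \emph{true}, not always false. The clause $c_W$ is then a genuine, non-tautological constraint on the models, and dropping it is not obviously sound. The paper's own proof also elides this point (it simply asserts the skipped clauses are tautologies without treating the constant case separately), so you are not missing something the paper supplies; but your proposal makes the explicit claim that the block is vacuous, and that goes the wrong way. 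To close the gap you would need either to argue that such constant constraints cannot occur in the inputs GDR is applied to, or to give a separate reason why the omitted $c_W$ is redundant given the retained clauses.
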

\begin{proof}
  Let $x$ be the constant introduced on \cref{line:constant} and $c$ a clause of
  $\phi$ selected on \cref{line:forclause} of the algorithm. We will show that
  $c$ is equivalent to the conjunction of clauses added to $\phi'$ on
  \cref{line:conditions,line:generation}.

  If $c$ has no variables with domain $\Omega$,
  \cref{line:conditions,line:generation} add a copy of $c$ to $\phi'$. If $c$
  has one variable with domain $\Omega$, say, $X$, then
  $c \equiv \forall X \in \Omega\text{. } \psi$ for some formula $\psi$ with one
  free variable $X$. In this case, \cref{line:conditions,line:generation} create
  two clauses: $\psi[x/\{\, X \,\}]$ and $\forall X \in \Omega\text{.
  } X \ne x \Rightarrow \psi$. It is easy to see that their conjunction is
  equivalent to $c$ provided that $\Omega \ne \emptyset$ and so $x$ is a
  well-defined constant.

  Let $V$ be the set of variables introduced on \cref{line:V}. It remains to
  prove that repeating the transformation
  \[
    \forall X \in \Omega\text{. } \psi \mapsto \{\, \psi[x/\{\, X \,\}], \forall X \in \Omega\text{. } X \ne x \Rightarrow \psi \,\}
  \]
  for all variables $X \in V$ produces the same clauses as
  \cref{line:conditions,line:generation} (except for some tautologies that the
  latter method skips). Note that the order in which these operations are
  applied is immaterial. In other words, if we add inequality constraints for
  variables ${\{\, X_{i} \,\}}_{i=1}^{n}$, and apply substitution for variables
  $V \setminus {\{\, X_{i} \,\}}_{i=1}^{n}$, then---regardless of the order of
  operations---we get
  \[
    \forall X_{1}, \dots, X_{n} \in \Omega\text{.} \bigwedge_{i=1}^{n} X_{i} \ne x \Rightarrow \psi\left[x / \left(V \setminus {\{\, X_{i} \,\}}_{i=1}^{n}\right)\right].
  \]
  This formula is equivalent to the clause generated on \cref{line:generation}
  with $W = V \setminus {\{\, X_{i} \,\}}_{i=1}^{n}$. Thus, for every clause $c$
  of $\phi$, the new formula $\phi'$ gets a set of clauses whose conjunction is
  equivalent to $c$. Hence, the two formulas are equivalent.
\end{proof}

\begin{theorem}[Correctness of $\CR$]
  Let $\phi$ be the input formula of \cref{alg:constraintremoval}, $(\Omega, x)$
  a replaceable pair, and $\phi'$ the output formula for when $(\Omega, x)$ is
  selected on \cref{line:crconditions}. Then $\phi \equiv \phi'$, where the
  domain $\Omega'$ introduced on \cref{line:newdomain} is interpreted as
  $\Omega \setminus \{\, x \,\}$.
\end{theorem}
\begin{proof}
  Since there is a natural bijection between the clauses of $\phi$ and $\phi'$,
  we shall argue about the equivalence of each pair of clauses. Let $c$ be an
  arbitrary clause of $\phi$ and $c'$ its corresponding clause of $\phi'$.

  If $c$ has no variables with domain $\Omega$, then it cannot have any
  constraints involving $x$, so $c' = c$. Otherwise, for notational simplicity,
  let us assume that $X$ is the only variable in $c$ with domain $\Omega$ (the
  proof for an arbitrary number of variables is virtually the same). By
  \cref{def:replaceable}, we can rewrite $c$ as $\forall X \in \Omega\text{.
  } X \ne x \Rightarrow \psi$, where $\psi$ is a formula with $X$ as the only free
  variable and with no mention of either $x$ or $\Omega$. Then
  $c' \equiv \forall X \in \Omega'\text{. } \psi$. Since
  $\Omega' \coloneqq \Omega \setminus \{\, x \,\}$, we have that $c \equiv c'$.
  Since $c$ was an arbitrary clause of $\phi$, this completes the proof that
  $\phi \equiv \phi'$.
\end{proof}

\begin{theorem}[Correctness of $\Reff$]
  Let $\phi$ be the formula used as input to \cref{alg:trycache}. Let $\psi$ be
  any formula selected on \cref{line:selectformula} of the algorithm s.t.\
  $\rho \ne \nulll$ on \cref{line:rho}. Let $\sigma$ be a domain size function.
  Then the set of models of $(\psi, \sigma \circ \rho)$ is equal to the set of
  models of $(\phi, \sigma)$.
\end{theorem}
\begin{proof}
  We first show that the right-hand side of \cref{eq:subset} is the same for
  models of both $(\phi, \sigma)$ and $(\psi, \sigma \circ \rho)$.
  \Cref{alg:trycache} ensures that $\psi \equiv \phi$ up to domains. In
  particular, this means that $\Preds(\psi) = \Preds(\phi)$. The square in
  \begin{equation}\label[diagram]{diagram:ref}
    \begin{tikzcd}
      \Preds(\psi) \ar[equal]{r} \ar[d, swap, "\pi_{\psi}"] & \Preds(\phi) \ar[d, "\pi_{\phi}"] \\
      {\Doms(\psi)}^{\ast} \ar[r, "\rho"] & {\Doms(\phi)}^{\ast} \ar[r, "\sigma"] & \mathbb{N}_{0}^{\ast}
    \end{tikzcd}
  \end{equation}
  commutes by \cref{def:formula} and the definition of $\rho$. In other words,
  for any predicate $\predicate/n \in \Preds(\psi) = \Preds(\phi)$, function
  $\rho$ translates the domains associated with $\predicate$ in $\psi$ to the
  domains associated with $\predicate$ in $\phi$. Let
  $\pi_{\phi}(\predicate) = {(\Delta_{i})}_{i=1}^{n}$ and
  $\pi_{\psi}(\predicate) = {(\Gamma_{i})}_{i=1}^{n}$ for some domains
  $\Delta_{i} \in \Doms(\phi)$ and $\Gamma_{i} \in \Doms(\psi)$. Since
  $\rho(\Gamma_{i}) = \Delta_{i}$ for all $i = 1, \dots, n$ by the commutativity
  in \cref{diagram:ref}, we get that
  \begin{equation}\label{eq:equality}
    \prod_{i=1}^{n}[\sigma(\Delta_{i})] = \prod_{i=1}^{n}[\sigma \circ \rho(\Gamma_{i})]
  \end{equation}
  as required. Obviously, any subset of the left-hand side of \cref{eq:equality}
  (e.g., $\mathfrak{M}(\predicate)$, where $\mathfrak{M}$ is a model of
  $(\phi, \sigma)$) is a subset of the right-hand side and vice versa. Since
  $\phi$ and $\psi$ semantically only differ in what domains they refer to,
  every model of $\phi$ satisfies $\psi$ and vice versa, completing the proof.
\end{proof}

\section{Solutions Found by \textsc{Crane}}\label{sec:solutions}

Here we list the exact function definitions produced by \textsc{Crane} for all
of the problem instances in Section~6, both before and after algebraic
simplification (excluding multiplications by one). The correctness of all of
them has been checked by identifying suitable base cases and verifying the
numerical answers across a range of domain sizes.

\begin{enumerate}
  \item A $\Theta(m)$ solution for counting $\Gamma \to \Gamma$ functions:
        \[
        f(m) = {\left(-1 + \sum_{l=0}^{m} \binom{m}{l} [l < 2]\right)}^{m} = m^{m}.
        \]
  \item A $\Theta(m^3 + n^3)$ solution for counting $\Gamma \to \Delta$
        surjections:
        \begin{align*}
          f(m, n) ={}& \sum_{l=0}^{m} \binom{m}{l}{(-1)}^{m-l} \sum_{k=0}^{n} \binom{n}{k} {(-1)}^{n-k}\\
                     &{\left( \sum_{j=0}^{k} \binom{k}{j} [j < 2] \right)}^{l} \\
          ={}& \sum_{l=0}^{m} \binom{m}{l}{(-1)}^{m-l}\\
                     &\sum_{k=0}^{n} \binom{n}{k} {(-1)}^{n-k} {(k+1)}^{l}.
        \end{align*}
  \item A $\Theta(m^{3})$ solution for counting $\Gamma \to \Gamma$ surjections:
        \begin{align*}
          f(m) ={}& \sum_{l=0}^{m} \binom{m}{l}{(-1)}^{m-l} \sum_{k=0}^{m} \binom{m}{k} {(-1)}^{m-k}\\
                  &{\left( \sum_{j=0}^{k} \binom{k}{j} [j < 2] \right)}^{l} \\
          ={}& \sum_{l=0}^{m} \binom{m}{l}{(-1)}^{m-l}\\
                  &\sum_{k=0}^{m} \binom{m}{k} {(-1)}^{m-k} {(k+1)}^{l}.
        \end{align*}
  \item A $\Theta(mn)$ solution for counting $\Gamma \to \Delta$ injections and
        partial injections (with different base cases):
        \begin{align*}
          f(m, n) ={}& \sum_{l=0}^m \binom{m}{l} [l<2] f(m-l, n-1)\\
          ={}& f(m, n-1) + mf(m-1, n-1).
        \end{align*}
  \item A $\Theta(m^{3})$ solution for counting $\Gamma \to \Gamma$ injections:
        \begin{align*}
          f(m) ={}& \sum_{l=0}^{m} \binom{m}{l} {(-1)}^{m-l} g(m, l); \\
          g(m, l) ={}& \sum_{k=0}^{l} \binom{l}{k} [k < 2] g(m - 1, l - k)\\
          ={}& g(m - 1, l) + lg(m - 1, l - 1).
        \end{align*}
  \item A $\Theta(m)$ solution for counting $\Gamma \to \Delta$ bijections:
        \[
        f(m, n) = mf(m-1, n-1).
        \]
  \item A $\Theta(l+mn)$ solution for counting $|\Lambda|$ partial injections
        $\Gamma \to \Delta$:
        \begin{align*}
          f(l, m, n) ={}& {g(m, n)}^{l}\\
          g(m, n) ={}& \sum_{k=0}^{n}\binom{n}{k} [k < 2] g(m-1, n-k)\\
          ={}& g(m-1, n) + ng(m-1, n-1).
        \end{align*}
\end{enumerate}

\bibliographystyle{kr}
\bibliography{paper}

\end{document}